\documentclass[pdflatex,sn-mathphys-num]{sn-jnl}

\usepackage{graphicx}%
\usepackage{multirow}%
\usepackage{amsmath,amssymb,amsfonts}%
\usepackage{amsthm}%
\usepackage{mathrsfs}%
\usepackage[title]{appendix}%
\usepackage{xcolor}%
\usepackage{textcomp}%
\usepackage{manyfoot}%
\usepackage{booktabs}%
\usepackage{algorithm}%
\usepackage{algorithmicx}%
\usepackage{algpseudocode}%
\usepackage{listings}%

\usepackage{parskip}
\usepackage{amssymb}
\usepackage{xtab}
\usepackage{float}
\usepackage{enumitem}
\usepackage{mathalfa}
\usepackage{array}
\usepackage{dsfont}

\usepackage{wrapfig}
\usepackage{cases}
\usepackage{bm}
\usepackage{tabularray}
\usepackage{hyperref}
\usepackage{cleveref}
\usepackage{subcaption}

\theoremstyle{thmstyleone}%
\newtheorem{theorem}{Theorem}
\newtheorem{proposition}[theorem]{Proposition}%
\newtheorem{lemma}[theorem]{Lemma}%

\theoremstyle{thmstyletwo}%

\theoremstyle{thmstylethree}%
\newtheorem{definition}{Definition}%

\begin{document}

\title[Article Title]{Convergence of Replicator Dynamics in the Repeated Prisoner's Dilemma with Restarts}


\author*[1]{\fnm{Benedict} \sur{Russell}}\email{benedict.i.russell@warwick.ac.uk}

\author[1]{\fnm{Chin-wing} \sur{Leung}}\email{chin-wing.leung@warwick.ac.uk}

\author[1]{\fnm{Paolo} \sur{Turrini}}\email{p.turrini@warwick.ac.uk}

\affil[1]{\orgname{University of Warwick}, \orgaddress{\city{Coventry}, \country{UK}}}

\abstract{
We investigate a population of self-interested agents playing a repeated Prisoner's Dilemma under the trigger-restart mechanism. Under such a mechanism, agents play a sequence of symmetric games with their partner, and restart the interaction if their actions disagree. Our work focuses on the convergence of replicator dynamics in a well-mixed population of agents, where the emergence of cooperation is challenged by the individual incentive for exploitation. Formulating the corresponding parametrised normal‐form game, with agents each adopting a length-$m$ strategy sequence, we show that increasing the strategy length enables cooperation to emerge and stabilise. We provide exact convergence guarantees for restricted strategy lengths and, in the general payoff configuration, provide the necessary parametric conditions for the stability of cooperative strategies. By deriving an exact formula for the number of stable sequences, we find structural properties necessary for stability, as agents must learn to initially defect - the so-called ``hazing period" - before cooperating indefinitely. Our analysis shows that, while optimal cooperative sequences exist, agents favour less-optimal sequences with a longer hazing period, which possess larger basins of attraction. 
}

\keywords{Cooperation, Social Dilemmas, Stability, Replicator Dynamics}

\maketitle

\section{Introduction}
From resource distribution and tax compliance to collective cell migration, financial markets and peer-to-peer networks, multi-agent systems can often only function if individual benefits are put aside in favour of a greater social good. This is stylised in games like the Prisoner's Dilemma (PD), where the benefits of mutual cooperation are overshadowed by the individual incentive to exploit others. If left to their own individual decisions, self-interested agents are bound to settle on inefficient outcomes, even ending up in what is known as the ``Tragedy of the Commons" \cite{Hardin_68}.

Social interaction can act as a mechanism to promote cooperation, where agents develop long-term relationships to build trust and achieve mutual benefit. One could use harsh punishment mechanisms to outcast defectors, such as the grim-trigger, where a player cooperates as long as their opponent does but permanently defects after a single defection \cite{friedman1971non}. 
Memory-based strategies have been extensively studied, popularised by the Tit-for-Tat strategy (TFT), which begins with cooperation, and employs direct reciprocity (and retribution) by mirroring the opponent's last action \cite{axelrod}.

However, in multi-agent environments, mutual cooperation can fail as defectors can simply switch to another partner when the opportunity arises. 
To counter this, we consider the class of infinitely repeated two-player games, where interactions can restart \cite{berker_computing_2024, berker_asymmetric}. Players agree to a sequence of actions e.g. $(D,C,C,...)$ and if neither deviates, both receive the reward $(1,3,3,...)$, where payoffs are taken from Table \ref{table: payoff}. Otherwise, they obtain the corresponding reward for deviation and the sequence restarts from the beginning. For example, if one player switches to defection in the second round, their reward profiles become $(1, 5, 1, 5,...)$ and $(1, 0, 1, 0,...)$, with the sequence restarting at round 2 each time. To reach mutual cooperation, players must coordinate over a period of lower payoffs before adopting a cooperative strategy.

While we know the trigger-restart mechanism is consistent with cooperative equilibria, we do not know if such equilibria can be reached by distributed local interactions in large agent populations, as well as how sensitive they are to the initial conditions. In this paper, we bridge this gap, using tools from dynamical systems to determine when and where the population will converge, providing explicit conditions on the game structure for cooperative equilibria to emerge through the trigger-restart mechanism.

\textbf{Contribution}.
We study the evolutionary dynamics of a well-mixed population of agents playing repeated Prisoner’s Dilemma with restarts.
We formulate the game as a parametrised normal‐form game, where agents each adopt a length-$m$ strategy sequence, with their final action indefinitely repeated.
We show that when the temptation payoff is large, pure defection is the unique Nash equilibrium for short strategy lengths.
We prove the necessary conditions for cooperative equilibria to exist, and show that pure defection leaves the population in the limit of $m$. Delving deeper into the structure of the stable sequences, we prove exponential stability of strict Nash equilibria, and derive an exact formula for the number of such sequences.
We analyse the basins of attraction, and show that while optimal cooperative sequences exist, agents favour longer hazing periods. 

\textbf{Paper Structure.}
Section \ref{sect: literature} presents the related literature. Section \ref{sect: preliminaries} discusses the preliminaries. Section \ref{sect: formulation} outlines the game formulation and necessary background. Section \ref{sect: theory1} analyses the theoretical properties of sequences, while Section \ref{sect: theory2} looks at stability and basin distributions. We conclude with future directions in Section \ref{sect: discussion}.

\section{Related Literature}
\label{sect: literature}
The classical route to cooperation in repeated Prisoner's Dilemmas relies on direct reciprocity, which conditions current actions based on historical information, to sustain a cooperative population. Indeed, TFT's famed success in heterogeneous environments highlights this behaviour \cite{axelrod}. 
Beyond memory-one strategies, analysis of memory-two has uncovered the formation of symmetric reinforcement learning equilibria in a two-player scenario \cite{Ueda_2023}. Other threshold-based rules, such as all-or-none, condition behaviour on unanimous outcomes over the entire history \cite{Hilbe_2017}, where extensive analysis under varying memory length and tolerance to errors showed improved robustness, leading to more cooperative outcomes. These strands show that increasing the memory used to inform actions can offer a rich class of equilibria beyond that of TFT \cite{zhang2024}.
However, under anonymous random matching, the strong informational requirements for memory-based strategies ---specifically, access to their new partner's previous actions--- can still be insufficient for the emergence of cooperation \cite{anastassacos_partner_2020}. 
This has led to a search for mechanisms to provide additional structure to the environment.

Cooperation in repeated games, with the possibility of restarting interactions, has been studied in both multi-agent learning \cite{anastassacos_partner_2020, defection_russell2026, russell2026dynamicspolicygradient} and economic theory settings \cite{Fujiwara, IZQUIERDO201491}. Partner selection has been identified as one of the effective mechanisms in promoting cooperation under various studies, both empirically \cite{BarclayWiller2007partnerchoice, rand2011, wang2012, Zhang2016} and theoretically \cite{SegbroeckSNPL09, sylwester2010cooperators, zheng_simple_2017, BaraTA22, russell2026meanfieldimitationdynamicsfast, graser2025repeated}.
Recent contributions showed that agents can, in fact, co-learn this partner selection rule and retaliate against defection by looking for a new partner \cite{leung_learning_2024, LeungLT24}.
Reputation has been recognised as another effective mechanism \cite{sabater2002, pujol2002, perreaudepinninck2010, smit2024, ren2025bottom} which can lead to cooperation, but it relies on a shared understanding of what constitutes good or bad actions, as well as effective communication channels \cite{SantosPS18}.
The results highlight the importance of developing trust and slowly building a stable partnership.

The concept of sequence-based strategies, which restart upon deviation, embeds the idea of interaction-length dependence.
Recent contributions have formalised the computation of finding equilibria under the trigger-restart mechanism \cite{berker_computing_2024, berker_asymmetric}, showing that the welfare-optimal, and indeed any cooperative sequence, must necessarily begin with a `hazing period' of mutual defection followed by eventual cooperation. As a consequence, the temptation to deviate from the sequence is balanced with the threat of repeating the initial hazing. Our work differs from their algorithmic equilibrium analysis by asking if and when cooperation will endogenously arise under evolutionary pressures, and how the payoff parameterisation affects the properties of the defection sequences through the lens of population dynamics.

Replicator dynamics \cite{taylor1978evolutionary} provide a simple framework to model the evolutionary dynamics in large, well-mixed populations. It has been used to study the stability of cooperative strategies in repeated games, ranging from the symmetric ultimatum game \cite{miekisz_replicator_2012} to graph-based social dilemmas \cite{ohtsuki_replicator_2006}. Recent work on global convergence in the repeated snow-drift game partitioned the strategy simplex by construction of strategy frequency ratios, proving convergence for every interior and boundary trajectory \cite{ramazi_global_2021}. In this paper, we adopt similar geometric techniques but in a richer, sequence-based policy space. This leads to different equilibrium structures and complex basin geometries tied to the length of the hazing period.

\section{Preliminaries}
\label{sect: preliminaries}
We review repeated Prisoner's Dilemmas equipped with restart mechanisms. We then provide the necessary background on replicator dynamics and dynamical systems.
\subsection{Repeated Prisoner’s Dilemma with Restarts}
We consider an infinitely large, well-mixed population of agents playing a repeated PD. Agents are given two possible actions: cooperate or defect, denoted C and D respectively. The payoff matrix is given by the pairwise interaction and is symmetric for each player, shown in Table \ref{table: payoff}.

\begin{table}[h]
   \caption{Payoff matrix}
  \label{table: payoff}
  \centering
  \begin{tabular}{lll}
    \toprule
         & C & D \\
    \midrule
    C & ${R},{R}$  & ${S},{T}$     \\
    D     & ${T},{S}$ & ${P},{P}$      \\
    \bottomrule
  \end{tabular}
\end{table}

The payoffs are such that $T > R > P > S$, with the first entry in each cell received by the row player.
This constraint ensures that it formulates a Prisoner's Dilemma, where there is an incentive to defect. For simplicity, in this paper we let $S=0$ and allow $P,R,T$ to parametrise the payoff matrix. Following \cite{ramazi_global_2021}, we denote this as our base game, and denote it $B$. In the one-shot scenario, defection is the only Nash equilibrium, and so cooperation cannot be achieved. 
In the repeated PD with restarts, both players agree on a common sequence of actions, such as $(D, D, C, ...)$, to be played. Under the trigger-restart mechanism, agents will accumulate rewards along this sequence whilst they agree upon the actions. The sequence is restarted when the two actions differ.
In the two-player game, it has been shown that the socially optimal sequence consists of a hazing period where both agents defect, followed by infinitely repeating cooperation \cite{berker_computing_2024}. Thus, we consider that agents adopt a sequence of $m$ actions, where the last action repeats after the $m^{th}$ round. As $m\to\infty$, it covers all possible action sequences.

\subsection{Replicator Dynamics}
We study the evolutionary dynamics of the population under the repeated Prisoner's Dilemma (PD) with restarts. The fraction of the population using a given strategy $s_k$ is $0 \leq x_k \leq 1, \forall k\in \{1, ..., d\}$, where the dimension of the strategy space is given by $d= 2^m$. Define the simplex which contains all vectors of fractions $\mathbf{x} \in \Delta$ as
\begin{equation}\label{eq: simplex}
    \Delta = \{x \in \mathbb{R}^d : \sum_{k=1}^dx_k = 1, x_k \geq 0, k = 1,...,d\}
\end{equation}
The replicator equation is given by 
\begin{equation}\label{eq: replicator}
    \dot{x}_k = x_k[(A\mathbf{x})_k - \mathbf{x}^TA\mathbf{x}],\quad k \in \{1,...,d\}
\end{equation}
where $\dot{x}_k$ denotes the time derivative. This describes an evolutionary process whereby strategy $k$ will increase if it has above-average payoff and decrease if it has below-average payoff. In this context, the process can be understood as a mean-field approximation of the Moran process \cite{moran} with imitation \cite{nowak2004emergence}:
\begin{enumerate}
    \item Each agent adopts a fixed sequence of actions, of length $m$.
    \item Agents are randomly paired, and play their sequence against their opponent under the trigger-restart mechanism. This repeats with the whole population, and thereby the agent accumulates an average payoff.
    \item An agent $i$ picked uniformly at random picks another agent $j$ in the population; if their payoff is higher, $i$ switches strategy to $j$'s with a rate proportional to the payoff difference.
\end{enumerate}

\subsection{Dynamical Systems}
To provide a rigorous analysis of the evolutionary dynamics, we borrow tools from dynamical systems. We outline the necessary definitions required which can be found in \cite{khalil_nonlinear_2002}. The population evolution is modelled as a trajectory starting and ending in \eqref{eq: simplex}.

\begin{definition}\label{def: positively invariant}
     Let $\dot{x} = f(x)$. Then $U \subseteq \mathbb{R}^d$ is positively invariant if for every $x(0) \in U$ then $x(t) \in U$ for all $t \geq 0.$ 
\end{definition}
It is a standard result that $\Delta$ is positively invariant under \eqref{eq: replicator} \cite{Sigmund1986}. Intuitively, the strategy distribution across the population can never exceed the boundaries of being a probability distribution. 
Stability is one of the most important characteristics of a dynamical system, which refers to the robustness of the stationary points of a system under perturbation.

Exponential stability indicates that any small perturbation will exponentially decay back to the equilibrium point.
\begin{definition}\label{def: exponential stability}
     Let $x^*=0$ be a rest point of the system $\dot{x} = f(x)$, with $x(t_0)=x_0$, then $x^*$ is exponentially stable if there exist $\delta, k, \lambda >0$ such that 
     \[
        \|x(t)\|\leq k\|x_0\|e^{-\lambda(t-t_0)}
     \]
     for all $\|x_0\|<\delta$.
\end{definition}
It is known that if the Jacobian of the dynamics at the stationary point $x^*$ is Hurwitz, the system is exponentially stable at $x^*$.
\begin{lemma}\label{lemma: exponential stability}\cite{khalil_nonlinear_2002}
     Let $x^*=0$ be a rest point of the system $\dot{x} = f(x)$. Define $J=[\partial f/\partial x](0)$. Then $x^*$ is an exponentially stable point iff the eigenvalues of $J$ satisfy Re$(\lambda_i) <0$ (Hurwitz). 
\end{lemma}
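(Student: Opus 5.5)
This is the standard linearisation theorem for exponential stability (Khalil, Thm.~4.15 and its converse), so the plan is to reproduce that argument: a quadratic Lyapunov function for sufficiency, and a converse Lyapunov function for necessity. Throughout we assume $f$ is continuously differentiable near $0$, and in fact $C^1$ with locally Lipschitz Jacobian for the converse. Both hold for the replicator field \eqref{eq: replicator}, since it is polynomial. Write
\[
f(x)=Jx+g(x), \qquad \|g(x)\|/\|x\|\to 0 \ \text{ as } x\to 0,
\]
which follows from differentiability at $0$ together with $f(0)=0$.

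\emph{Sufficiency.} Suppose $J$ is Hurwitz.
\begin{enumerate}
\item Solve the Lyapunov equation $PJ+J^TP=-I$. The solution $P=\int_0^\infty e^{J^Ts}e^{Js}\,ds$ exists because $\|e^{Js}\|\le c e^{-\alpha s}$ for some $\alpha>0$ when $J$ is Hurwitz, and $P$ is symmetric positive definite.
\item Take $V(x)=x^TPx$. Along trajectories,
\[
\dot V=-\|x\|^2+2x^TPg(x).
\]
\item Choose $r$ so that $\|g(x)\|\le \|x\|/(4\lambda_{\max}(P))$ on $\|x\|<r$. Then $\dot V\le -\tfrac12\|x\|^2\le -\tfrac{1}{2\lambda_{\max}(P)}V$.
\item By the comparison lemma, $V(x(t))\le V(x_0)e^{-(t-t_0)/(2\lambda_{\max}(P))}$.
\item Use $\lambda_{\min}(P)\|x\|^2\le V\le\lambda_{\max}(P)\|x\|^2$ to convert this into the bound of Definition~\ref{def: exponential stability}, with $k=\sqrt{\lambda_{\max}/\lambda_{\min}}$ and $\lambda=1/(4\lambda_{\max})$.
\item Take $\delta=r/k$. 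A sublevel-set argument ensures the trajectory never leaves the ball where the estimate on $g$ holds.
\end{enumerate}

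\emph{Necessity.} This direction is the main obstacle. Suppose $0$ is exponentially stable with constants $k,\lambda,\delta$.
\begin{enumerate}
\item Build the converse Lyapunov function $V(x)=\int_0^{T}\|\phi(\tau;x)\|^2\,d\tau$ for a suitable horizon $T$, where $\phi(\tau;x)$ is the solution starting at $x$.
\item Show that it satisfies $c_1\|x\|^2\le V\le c_2\|x\|^2$, $\dot V\le -c_3\|x\|^2$ and $\|\partial V/\partial x\|\le c_4\|x\|$. The sensitivity bound on $\partial\phi/\partial x$ comes from Gronwall applied to the variational equation, using the Lipschitz bound on $\partial f/\partial x$.
\item Apply the same $V$ to the linearised system $\dot x=Jx$:
\[
\frac{\partial V}{\partial x}Jx=\dot V-\frac{\partial V}{\partial x}g(x)\le -c_3\|x\|^2+c_4\|x\|\,\|g(x)\|\le -\tfrac{c_3}{2}\|x\|^2
\]
for small $\|x\|$. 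By homogeneity of the linear flow, this inequality extends to all $x$.
\item Conclude that $\dot x=Jx$ is globally exponentially stable. Testing on eigenvector initial conditions shows no eigenvalue of $J$ can have $\mathrm{Re}(\lambda_i)\ge 0$, since otherwise a nondecaying solution would exist. Hence $J$ is Hurwitz.
\end{enumerate}

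The delicate points are the differentiability and quadratic bounds on the converse $V$. Since the lemma is quoted from \cite{khalil_nonlinear_2002}, in the paper it suffices to cite Theorem~4.14 there for the converse and Theorem~4.15 for the equivalence.
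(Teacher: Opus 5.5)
Your proposal is correct and is essentially the argument the paper relies on: the paper gives no proof of its own and only cites Khalil, whose standard proof is exactly yours, namely the quadratic Lyapunov function $x^TPx$ from the Lyapunov equation for sufficiency and the converse Lyapunov function applied to the linearisation for necessity. One small slip is that the inequality $\frac{\partial V}{\partial x}Jx\le -\frac{c_3}{2}\|x\|^2$ does not itself extend to all $x$, since your $V$ is built from the nonlinear flow, is defined only near $0$, and is not homogeneous; what it gives is local exponential stability of $\dot x = Jx$, which is global by linearity and in any case is all your eigenvector test needs.
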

In the bounded domain of the simplex, this stability can be assessed on the tangent space $T_x\Delta =\{v \in \mathbb{R}^d: \sum_i v_i = 0\}$. 
Complementing this notion of stability is the proportion of the strategy space which converges to these fixed points; this region is known as the basin of attraction (BoA).
\begin{definition}\label{def: basin of attraction}
     Let $\phi(t;x)$ be the solution to $\dot{x}=f(x)$, and $x^*$ be an equilibrium point. Then the basin of attraction for $x^*$ is the set of all points $x$ such that the solution exists for $t\geq0$ and $\lim_{t\rightarrow\infty} \phi(t;x) = x^*$. 
\end{definition}
In high dimensions, analytical formulations for each basin are often intractable. Researchers can consider asymptotic analysis under certain regularity conditions such as ordinal potential games \cite{collevecchio2024basins}. When such guarantees are not in place, numerical schemes can be used to estimate and analyse the basin sizes.

\section{Repeated Games with Restarts: Learning Finite Strategies}\label{sect: formulation}

In this section, we formulate repeated games with restarts and study a simplified setting where agents are learning finite strategy sequences made by $m$ actions, where the last action repeats after the $m^{th}$ round. Besides ensuring tractability, this assumption allows for a direct comparison with the results in the literature \cite{berker_computing_2024, berker_asymmetric}.
Concretely, define
\begin{align*}
    \mathcal{S} := \{s :s_i \in \{C,D\},\;\forall  i \in \{1,...,m\}\},
\end{align*}
as the set of all possible sequences where the action $s_m$ repeats after round $m$. Increasing the length of $m$ will allow for greater strategic complexity, creating longer sequences for the agents to decide between.
At the start, each agent adopts a length $m$ sequence in $\mathcal{S}$. Agents are then paired; let $s,t\in\mathcal{S}$ be the sequences of a given pair. Define the time of the first deviation between the two sequences as \[
\tau(s,t) \;=\;
\begin{cases}
\min\{\,i\in\{1,\dots,m\}:s_i\neq t_i\}, & s\neq t,\\
\infty, & s=t.
\end{cases}
\]
 The rewards are discounted at some rate $\gamma \in (0,1)$, enabling the construction of the normal form game's payoff matrix by ensuring a finite total reward for each possible sequence. For sequences $s,t \in S$, the payoff for the row player (adopting sequence $s$) is evaluated as
\begin{equation}\label{eq: payoff_nf}
A(s, t) :=
\begin{cases}
\dfrac{\displaystyle\sum_{i=1}^{\tau(s,t)} \gamma^{\,i-1}\,B_{s_i,t_i}}
{1 - \gamma^{\tau(s,t)}}, 
& \tau(s,t)<\infty, \\[1em]
{\displaystyle\sum_{i=1}^{m} \gamma^{\,i-1}\,B_{s_i,s_i}}
 {\;+\;
\dfrac{\gamma^m}{1 - \gamma}\,B_{s_m,s_m}},
& \tau(s,t)=\infty,
\end{cases}
\end{equation}
where $B_{s_i,t_i}$ is the immediate payoff for the row player when actions $(s_i,t_i)$ are selected and $\tau$ encodes if and when the trigger-restart mechanism is used. When $\tau < \infty$, the sequence of agreed-upon actions up to $\tau$ repeats indefinitely (actions will differ at step $\tau$ for each restart). Otherwise, the sequence continues with the final agreed-upon actions. Table \ref{tab:nfg_infinite} illustrates a concrete instantiation of the normal-form game with $m=2$.

\begin{table}
    \caption{Normal-form payoff ($m=2$)}
    \centering
    \renewcommand{\arraystretch}{1.5}
    \begin{tabular}{ccccc}
        \toprule
        & \text{CC} & \text{CD} & \text{DC} & \text{DD}\\ 
        \midrule
        \text{CC} & $\frac{3}{1-\gamma}$ & $\frac{3}{1-\gamma^2}$ & 0 &  0\\
        \text{CD} & $\frac{3 + \gamma 5}{1-\gamma^2}$ & $3+ \frac{\gamma}{1-\gamma}$ & 0 & 0\\
        \text{DC} & $\frac{5}{1-\gamma}$ & $\frac{5}{1-\gamma}$ & $1+\frac{3\gamma}{1-\gamma}$& $\frac{1}{1-\gamma^2}$\\
        \text{DD} & $\frac{5}{1-\gamma}$ & $\frac{5}{1-\gamma}$ & $\frac{1 + \gamma 5}{1-\gamma^2}$ & $\frac{1}{1-\gamma}$\\
        \bottomrule
    \end{tabular}
    \label{tab:nfg_infinite}
\end{table}

Humans were found to adopt strategies similar to TFT \cite{zheng_simple_2017}, suggesting that a restriction on $m$ may be more applicable to real-world scenarios in which agents adopt simpler strategies. In the context of social dilemmas, how one interacts with another is likely to be based only on the first few interactions, and once a pattern is established, the decision-making is set. In this paper, we will analyse both the minimal length requirements for cooperative behaviour to emerge and the asymptotic analysis.

Whilst condensing the game with geometric summations enables an explicit formulation of the normal form game, we are unable to utilise tools often used in analysing multi-agent games and their corresponding basins of attraction. As the next result shows, the formulation is not a potential game. Hence, no global potential function can be studied. 

\begin{lemma}\label{lem: potential game}
    For any initialisation of $A$ with $m\geq 2$, the game $\Gamma =\langle N, \mathcal{S}, A\rangle$ is not a potential game.
\end{lemma}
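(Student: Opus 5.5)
The plan is to use the Monderer--Shapley characterisation: a finite game is an exact potential game if and only if, around every closed path of length four made of unilateral deviations, the players' payoff changes sum to zero. Here $\Gamma$ is the symmetric two-player game in which the row player receives $A(s,t)$ and the column player receives $A(t,s)$. It therefore suffices to exhibit, for every $m\geq 2$ and every admissible $(P,R,T,\gamma)$ with $S=0$, one 4-cycle whose total deviation gain is nonzero.

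A cycle using only two strategies will not work, because every symmetric $2\times 2$ game is a potential game and the corresponding sum cancels identically. So the cycle must involve three sequences. I will use
\[
c=(C,C,\dots,C),\qquad d=(D,D,\dots,D),\qquad e=(D,C,\dots,C),
\]
which are pairwise distinct precisely because $m\geq 2$; this is where the hypothesis enters.

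Consider the cycle $(c,c)\to(d,c)\to(d,e)\to(c,e)\to(c,c)$, in which the row player moves on the first and third steps and the column player on the second and fourth. Its total deviation gain is
\[
\Sigma=[A(d,c)-A(c,c)]+[A(e,d)-A(c,d)]+[A(c,e)-A(d,e)]+[A(c,c)-A(e,c)].
\]
Each entry follows directly from \eqref{eq: payoff_nf} by reading off $\tau$.
\begin{itemize}
\item $A(c,c)=R/(1-\gamma)$, since the sequences agree and $\tau=\infty$.
\item $A(d,c)=A(e,c)=T/(1-\gamma)$, since $\tau=1$ and the first-round payoff is $T$.
\item $A(c,d)=A(c,e)=0$, since $\tau=1$ and the first-round payoff is $S=0$.
\item $A(e,d)=(P+\gamma S)/(1-\gamma^2)=P/(1-\gamma^2)$, since $\tau=2$.
\item $A(d,e)=(P+\gamma T)/(1-\gamma^2)$, since $\tau=2$.
\end{itemize}
These values do not depend on $m$ beyond $m\ge 2$, because only the first two coordinates matter when $\tau\le 2$.

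Substituting, the $R$ and $T/(1-\gamma)$ terms cancel in pairs, leaving
\[
\Sigma=\frac{P}{1-\gamma^2}-\frac{P+\gamma T}{1-\gamma^2}=-\frac{\gamma T}{1-\gamma^2}.
\]
This is nonzero because $\gamma\in(0,1)$ and $T>R>P>S=0$, so no exact potential exists.

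The only real obstacle is choosing a cycle that does not trivially cancel, which is why three distinct sequences are needed. If the paper means ordinal rather than exact potential, I would instead look for a better-response cycle; the sign pattern above already suggests one through $c,d,e$.
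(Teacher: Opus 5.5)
Your proof is correct and follows the same route as the paper. Both apply the Monderer--Shapley four-cycle criterion for exact potentials; your cycle through $c$, $d$, $e$ differs from the paper's choice of $(C,\dots,C)$, $(D,C,\dots,C)$, $(C,D,C,\dots,C)$ but yields the same nonzero obstruction $\gamma(S-T)/(1-\gamma^2)$, though your closing aside about ordinal potentials is unsupported as stated. The deviation gains along your cycle have signs $+,+,-,-$, so it is not an improvement cycle in either orientation.
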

\begin{proof}
    Let $i,j \in N$ be the agents playing the game, where $A^i(\cdot,\cdot)$ and $A^j(\cdot,\cdot)$ denote the payoff utility for $i$ and $j$ respectively. $\Gamma$ is an exact potential game iff $\forall s,s',t,t' \in \mathcal{S}$ \cite{shapley1996},  
    \begin{align*}
        &[A^i(s',t) - A^i(s,t)] - [A^i(s',t') - A^i(s,t')] 
        = [A^j(s,t') - A^j(s,t)] - [A^j(s',t') - A^j(s',t)].
    \end{align*}
    Let $s=(C,C,...,C), s'=(D,C,...,C), t=(C,C,...,C), t'=(C,D,...,C)$. Then,
    \begin{align*}
        [\frac{T}{1-\gamma} - \frac{R}{1-\gamma}] - [\frac{T}{1-\gamma} - \frac{R+\gamma S}{1-\gamma^2}] &= [\frac{R+\gamma T}{1-\gamma^2} - \frac{R}{1-\gamma}] - [\frac{S}{1-\gamma} - \frac{S}{1-\gamma}]\\
        \Rightarrow\frac{\gamma(S-T)}{1-\gamma^2}&=0
    \end{align*}
    but $T>S$ and $\gamma \in (0,1)$, therefore $\Gamma$ is not a potential game.
\end{proof}
Instead, we utilise techniques at a local scale to analyse when specific sequences emerge which result in cooperative behaviour and find the sufficient conditions for the population to avoid converging to the defection Nash equilibrium.
\section{Convergence to Cooperation}
\label{sect: theory1}
In this section, we analyse the necessary and sufficient conditions for the population to learn cooperative stable sequences. We show pure defection dies out in the population as the strategy sequence length increases, and provide the explicit separation of the simplex when suitable restrictions are applied. We conclude by showing the parametrisation impact on the existence of cooperative sequences.

\subsection{Longer Sequences Avoid Pure Defection}
Since cooperation is not a Nash equilibrium in the one-shot game, we need sufficient degrees of freedom in the strategy space for it to emerge. Undoubtedly, the parametrisation of the payoff matrix will influence both the number of stable sequences and population dynamics. We begin by showing how simple, short sequences can be directly controlled by the payoff matrix structure.
\begin{lemma}\label{lem: m2_dd}
        Let $m=2$ and $\gamma\rightarrow1$. If $T+P\geq 2R$, then DD is the only Nash equilibrium.
\end{lemma}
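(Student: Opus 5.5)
The plan is to show that, under $T+P\geq 2R$, the strategy DD \emph{strictly} dominates each of CC, CD and DC in the $m=2$ normal-form game (\ref{eq: payoff_nf}), written with general $T>R>P>S=0$. Strict dominance gives uniqueness at once: a strictly dominated pure strategy cannot be in the support of any (mixed) Nash equilibrium. That leaves the pure profile (DD,DD) as the only candidate, and it is an equilibrium because DD is a best response to DD. Only the symmetric strategy space matters, since the game is symmetric.

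\textbf{Step 1: the general matrix.} First I write out the general analogue of Table \ref{tab:nfg_infinite}, using $\tau$ and \eqref{eq: payoff_nf}:
\begin{align*}
A(\mathrm{CC},\cdot)&=\Bigl(\tfrac{R}{1-\gamma},\ \tfrac{R}{1-\gamma^2},\ 0,\ 0\Bigr),\\
A(\mathrm{CD},\cdot)&=\Bigl(\tfrac{R+\gamma T}{1-\gamma^2},\ R+\tfrac{\gamma P}{1-\gamma},\ 0,\ 0\Bigr),\\
A(\mathrm{DC},\cdot)&=\Bigl(\tfrac{T}{1-\gamma},\ \tfrac{T}{1-\gamma},\ P+\tfrac{\gamma R}{1-\gamma},\ \tfrac{P}{1-\gamma^2}\Bigr),\\
A(\mathrm{DD},\cdot)&=\Bigl(\tfrac{T}{1-\gamma},\ \tfrac{T}{1-\gamma},\ \tfrac{P+\gamma T}{1-\gamma^2},\ \tfrac{P}{1-\gamma}\Bigr).
\end{align*}

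\textbf{Step 2: column-by-column comparison.} I then compare the DD row against each other row in every column.
\begin{itemize}
\item Against CC and CD, DD is strictly better in every column, using only $T>R>P>0$. Clearing denominators shows this, for example $T(1+\gamma)>R+\gamma T$ and $T>R(1-\gamma)+\gamma P$. The last two columns are positive versus $0$.
\item Against DC, the first two columns tie, and the DD column is strict since $\tfrac{P}{1-\gamma}>\tfrac{P}{1-\gamma^2}$.
\end{itemize}

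\textbf{Step 3: the DC column (main obstacle).} The only delicate comparison is DD versus DC when the opponent plays DC. In the limiting, leading-order picture (multiplying by $1-\gamma$ and letting $\gamma\to1$), the comparison reduces to $\tfrac{P+T}{2}$ versus $R$. This is exactly where the hypothesis $T+P\geq 2R$ enters. At equality, however, the leading-order comparison is a tie and would allow DC to be a weak equilibrium. So I will do the comparison exactly for $\gamma<1$ rather than in the limit. Multiplying by $1-\gamma^2$, the difference DD minus DC becomes
\[
\gamma(T-R)+\gamma^2(P-R)=\gamma\bigl[(T-R)-\gamma(R-P)\bigr].
\]
The hypothesis gives $T-R\geq R-P$, so this is at least $\gamma(1-\gamma)(R-P)>0$.

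\textbf{Conclusion.} Steps 2 and 3 give strict dominance of DD over DC. This holds for every $\gamma\in(0,1)$, in particular as $\gamma\to1$. Combined with the opening observation, DD is the only Nash equilibrium.
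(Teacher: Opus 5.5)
Your Step~2 records that DD and DC tie in the CC and CD columns (both earn $\frac{T}{1-\gamma}$), yet your conclusion asserts that DD \emph{strictly} dominates DC. That is false. With those ties DD only weakly dominates DC, and your opening principle (a strictly dominated strategy never lies in the support of a Nash equilibrium) does not apply to weakly dominated strategies. Indeed, against an opponent mixing only over CC and CD, DC is exactly as good a reply as DD, so something more is needed to exclude DC.

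The repair is a second round of elimination. Your Step~2 correctly shows that CC and CD are strictly dominated by DD for both players. Hence in any Nash equilibrium the opponent's strategy is supported on $\{\mathrm{DC},\mathrm{DD}\}$. Against each of these, DD earns strictly more than DC, by your Step~3 and $\frac{P}{1-\gamma}>\frac{P}{1-\gamma^2}$. So DC is never a best reply in equilibrium. Since DD is also the strict best reply to itself, $(\mathrm{DD},\mathrm{DD})$ is the unique equilibrium, including among mixed and asymmetric profiles.

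With that patch, your route differs from the paper's. The paper does not compute the $m=2$ matrix. It instead invokes the hazing-period argument of Lemma~\ref{lem: num_nash} in the $\gamma\to1$ regime. Since $T+P\geq 2R$ gives $\lfloor\frac{T-R}{R-P}+1\rfloor\geq 2$, a stable cooperative sequence would need at least two initial defections, which exhausts $m=2$. That argument is shorter and scales to any $m$. However, it only addresses pure (strict) symmetric equilibria and rests on the limiting comparison.

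Your direct computation holds for every fixed $\gamma\in(0,1)$ and covers mixed equilibria. It also handles the boundary case $T+P=2R$ correctly. There, the limiting comparison of DD and DC against DC is a tie: in the normalised limit game, DC is a weak best reply to itself. Strictness comes only from the $\gamma(1-\gamma)(R-P)$ margin you identified.
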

\begin{proof}
    Note first that (DD) is a Nash equilibrium since there is no possible deviation which increases the payoff. By a similar proof to Lemma \ref{lem: num_nash}, a stable sequence must start with at least $\lfloor\frac{T-R}{R-P}+1\rfloor$ defections.
    \begin{align*}
        &T+P\geq 2R\\
        \Leftrightarrow\quad& \frac{T-R}{R-P} \geq1\\
        \Rightarrow{}\;\;& \lfloor\frac{T-R}{R-P}+1\rfloor \geq2.
    \end{align*}
    Hence, all stable sequences must start with two defections, which is exactly the length of the sequence.
\end{proof}
As we increase the strategy length $m$, more Nash equilibria are generated. It is not clear, however, how many steps are sufficient for the population to adopt a cooperative sequence. To provide a bound on when cooperative equilibria occur, we consider the closest strategy to pure defection. This strategy, which we denote the last-step cooperator (LC), exploits the stability of defection before utilising the infinite repetition of the final move to cooperate. Explicitly, these strategies are as follows:
\begin{align*}
    s_{LC} = (D,D,....,D,C), \qquad s_{D} = (D,D,....,D,D),
\end{align*}
with the first containing $m-1$ defections and the latter $m$.
Analysing the ratio of these strategies \cite{Hofbauer_Sigmund_1998}, we isolate the strategy dynamics of (LC) and pure defection (D),
\begin{equation}
    \frac{d}{dt}(\frac{x_{D}}{x_{LC}}) = \frac{x_{D}}{x_{LC}}[(A\mathbf{x})_{D} - (A\mathbf{x})_{LC}]
\label{eq: xd/xlc}
\end{equation}
and equate the fitnesses to find the payoff boundary. The separatrix can be derived by equating the payoffs:
\begin{align*}
    &(A\mathbf{x})_D - (A\mathbf{x})_{LC} = 0,\\
    \Leftrightarrow\quad &\sum_j A_{D,j}x_j - A_{LC,j}x_j =0,\\
    \Leftrightarrow\quad &x_{LC}(A_{LC,LC} - A_{D,LC}) + x_D(A_{LC,D} - A_{D,D}) =0,\\
    \Leftrightarrow\quad &x_{D} =x_{LC}\frac{A_{LC,LC} - A_{D,LC}}{A_{D,D} - A_{LC,D}}.
\end{align*}
The requirement for equality then reduces to analysing the quotient of the payoff differences. This gives the explicit formulation for the separatrix as
\begin{equation}\label{eqn: psi the seperatrix}
        \varphi(\gamma,B,m) = \frac{(R-P)\sum^{m-1}_{j=1}\gamma^j + (R-T)}{P}.
\end{equation}
This equation can be understood as the marginal utility of cooperation over the next $(m-1)$ steps, less the temptation to deviate. If there are insufficient steps, the temptation reward is worth repeating the initial phase of lower reward. Consequently, the dynamics of \eqref{eq: xd/xlc} can be concretely understood with the next lemma.
\begin{lemma}\label{lem: separatrix splits Ax}
    $(A\mathbf{x})_{LC} - (A\mathbf{x})_D <0$ if and only if $x_D > x_{LC}\; \varphi(\gamma,B,m)$.
\end{lemma}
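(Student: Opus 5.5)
The plan is to compute $(A\mathbf{x})_{LC}-(A\mathbf{x})_D=\sum_j (A_{LC,j}-A_{D,j})x_j$ explicitly from \eqref{eq: payoff_nf}. The argument rests on first showing that all columns other than $LC$ and $D$ cancel, and then evaluating the two surviving coefficients in closed form.

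\textbf{Step 1 (cancellation).} Both $s_{LC}$ and $s_D$ consist of $D$ in positions $1,\dots,m-1$. Let $t\in\mathcal{S}\setminus\{s_{LC},s_D\}$. Since $s_{LC}$ and $s_D$ are the only sequences whose first $m-1$ entries are all $D$, $t$ has some $C$ among its first $m-1$ entries. Hence $\tau(s_{LC},t)=\tau(s_D,t)=\tau\le m-1$, and the two row sequences play identical actions at every step $i\le\tau$. By \eqref{eq: payoff_nf} this gives $A_{LC,t}=A_{D,t}$. Therefore
\[
(A\mathbf{x})_{LC}-(A\mathbf{x})_D = x_{LC}\,(A_{LC,LC}-A_{D,LC}) - x_D\,(A_{D,D}-A_{LC,D}),
\]
which is the reduction already used to derive the separatrix.

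\textbf{Step 2 (the coefficient of $x_D$ is positive).} With $S=0$ and $\tau(s_{LC},s_D)=m$, we have $A_{LC,D}=P(1-\gamma^{m-1})/((1-\gamma)(1-\gamma^m))$ and $A_{D,D}=P/(1-\gamma)$. This gives
\[
A_{D,D}-A_{LC,D}=\frac{P\gamma^{m-1}}{1-\gamma^m}>0,
\]
using $P>0$. Dividing the inequality $(A\mathbf{x})_{LC}-(A\mathbf{x})_D<0$ by this positive quantity preserves its direction, so the inequality is equivalent to $x_D > x_{LC}\,\frac{A_{LC,LC}-A_{D,LC}}{A_{D,D}-A_{LC,D}}$.

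\textbf{Step 3 (identify the quotient with $\varphi$).} Set $K=P(1-\gamma^{m-1})/(1-\gamma)$. Then
\[
A_{LC,LC}=K+\frac{\gamma^{m-1}R}{1-\gamma}, \qquad A_{D,LC}=\frac{K+\gamma^{m-1}T}{1-\gamma^m}.
\]
Multiply the difference $A_{LC,LC}-A_{D,LC}$ by $(1-\gamma^m)/\gamma^{m-1}$ and use $(1-\gamma^m)/(1-\gamma)=\sum_{j=0}^{m-1}\gamma^j$. This yields
\[
-\gamma P\sum_{j=0}^{m-2}\gamma^j + R\sum_{j=0}^{m-1}\gamma^j - T=(R-P)\sum_{j=1}^{m-1}\gamma^j + (R-T).
\]
Dividing by the same normalisation of $A_{D,D}-A_{LC,D}$, which is $P$, gives exactly $\varphi(\gamma,B,m)$ from \eqref{eqn: psi the seperatrix}. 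The equivalence in both directions then follows immediately.

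The main obstacle is the algebra in Step 3: the common factor $\gamma^{m-1}/(1-\gamma^m)$ must be tracked carefully so that it cancels between numerator and denominator. The only conceptual point requiring care is Step 1, namely that no other sequence shares the full $D$-prefix of length $m-1$. This holds because the prefix together with $s_m\in\{C,D\}$ determines the sequence uniquely.
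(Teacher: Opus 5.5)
Your proof is correct and follows the same route as the paper: reduce $(A\mathbf{x})_{LC}-(A\mathbf{x})_D$ to the two surviving columns, divide by $A_{D,D}-A_{LC,D}$, and identify the quotient with $\varphi$ using the algebra the paper gives in its appendix. Your Steps 1 and 2 usefully make explicit two facts the paper's proof uses without comment: that $A_{LC,t}=A_{D,t}$ for every other column $t$, and that $A_{D,D}-A_{LC,D}=P\gamma^{m-1}/(1-\gamma^m)>0$, so dividing by it preserves the direction of the inequality.
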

\begin{proof}
Consider the payoff difference given by
\begin{align*}
     (A\mathbf{x})_{LC} - (A\mathbf{x})_D &< 0\\
    \Leftrightarrow\quad \sum_j A_{LC,j}x_j - A_{D,j}x_j &<0\\
    \Leftrightarrow\quad x_{LC}(A_{LC,LC} - A_{D,LC}) + x_D(A_{LC,D} - A_{D,D}) &<0\\
    \Leftrightarrow\quad x_D &>x_{LC}\frac{A_{LC,LC} - A_{D,LC}}{A_{D,D} - A_{LC,D}}\\
    \Leftrightarrow\quad x_D &>x_{LC}\varphi(\gamma,B,m)
\end{align*}
where the final line uses the identity in Equation \eqref{eqn: psi the seperatrix}.
\end{proof}
The evolution of the ratio between the two strategies is only dependent on these strategies, and does not depend on the rest of the strategy space. As such, this plays a critical role in determining when the population will move towards or away from the pure defection equilibrium. We show that the set defined by the intersection of this hyperplane and the simplex is positively invariant, and thus enables further analysis on the existence, uniqueness and basin of attraction of the pure defection strategy.

\begin{proposition}\label{prop: positively invariant}
    Let $v = [0,0,...,-\varphi,1]$. Then the set $U = \{x \in \mathbb{R}^d | \sum^d_k x_k = 1, x_k \geq 0, v \cdot x \leq 0\}$ is positive invariant under the replicator dynamics. 
\end{proposition}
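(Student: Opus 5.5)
The plan is to reduce positive invariance of $U$ to a sign-preservation statement for a single scalar function along trajectories. Since $\Delta$ is already positively invariant under \eqref{eq: replicator}, it suffices to show that if $x(0)\in U$ then $g(t):=v\cdot x(t)$ stays non-positive. With the ordering of $\mathcal{S}$ implicit in $v$ (the last two coordinates being $s_{LC}$ and $s_D$), we have
\begin{equation*}
g = x_D - \varphi\, x_{LC}.
\end{equation*}
I would show that $g$ satisfies a linear scalar ODE $\dot g = h(t)\,g$ with $h$ continuous along the trajectory. Then $g(t)=g(0)\exp\!\big(\int_0^t h\big)$, and the sign of $g$ is preserved for all $t\ge 0$.

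\textbf{Step 1 (structure of the payoff rows).} First check that $A_{D,j}=A_{LC,j}$ for every $j\notin\{LC,D\}$. Any such sequence $s_j$ differs from both $s_D$ and $s_{LC}$ at some index $\tau<m$. At that index, and at all earlier ones, $s_D$ and $s_{LC}$ coincide (both play $D$), so \eqref{eq: payoff_nf} returns the same value. This is the fact already used in deriving Lemma \ref{lem: separatrix splits Ax}. It gives
\begin{equation*}
(A\mathbf{x})_D-(A\mathbf{x})_{LC} = c\,(x_D-\varphi x_{LC}) = c\,g,
\end{equation*}
where $c := A_{D,D}-A_{LC,D}$. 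A quick check from \eqref{eq: payoff_nf} shows $c = \frac{P}{1-\gamma}-\frac{P\sum_{i=1}^{m-1}\gamma^{i-1}}{1-\gamma^m}>0$, and $\varphi=(A_{LC,LC}-A_{D,LC})/c$ by \eqref{eqn: psi the seperatrix}.

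\textbf{Step 2 (differentiate $g$).} Using \eqref{eq: replicator}, write
\begin{equation*}
\dot g = x_D\big[(A\mathbf{x})_D-\mathbf{x}^TA\mathbf{x}\big]-\varphi x_{LC}\big[(A\mathbf{x})_{LC}-\mathbf{x}^TA\mathbf{x}\big].
\end{equation*}
Substituting $x_D = g+\varphi x_{LC}$ in the first term and regrouping gives
\begin{equation*}
\dot g = g\big[(A\mathbf{x})_D-\mathbf{x}^TA\mathbf{x}\big] + \varphi x_{LC}\big[(A\mathbf{x})_D-(A\mathbf{x})_{LC}\big].
\end{equation*}
By Step 1 this equals $g\cdot h$, with
\begin{equation*}
h = (A\mathbf{x})_D-\mathbf{x}^TA\mathbf{x}+c\,\varphi\, x_{LC}.
\end{equation*}
Since $x(t)$ is a smooth solution remaining in the compact set $\Delta$, $h$ is continuous and bounded. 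Solving the linear ODE then gives $g(t)\le 0$ whenever $g(0)\le 0$, so $x(t)\in U$.

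\textbf{Main obstacle.} The main point to get right is Step 1: the cancellation of all other strategies. This is what makes the hyperplane $\{g=0\}$ invariant, so that $\dot g$ vanishes there rather than merely pointing inward. I would verify it carefully from the definition of $\tau$, including the role of the $\gamma^{\tau}$ normalisation.

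A minor point is the case $\varphi\le 0$. Then $U$ forces $x_D=0$ (and also $x_{LC}=0$ if $\varphi<0$), so $U$ is a union of faces of $\Delta$. The same ODE argument covers this, as do the standard invariance of faces under \eqref{eq: replicator}, so no separate treatment should be needed beyond a remark.
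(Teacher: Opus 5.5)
Your proof is correct. It rests on the same two ingredients as the paper's. The first is the cancellation $A_{D,j}=A_{LC,j}$ for $j\notin\{D,LC\}$, which the paper uses tacitly when deriving the separatrix and which you actually verify from the definition of $\tau$. The second is differentiating $g=x_D-\varphi x_{LC}$ along the replicator flow.

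The difference lies in how invariance is concluded. The paper evaluates $\dot g$ only on the hyperplane $x_D=\varphi x_{LC}$ and finds that it vanishes there. Together with $\sum_i \dot x_i=0$ and $\dot x_i=0$ on the faces $x_i=0$, it treats this tangency of the vector field to $\partial U$ as sufficient. Strictly, that step needs a Nagumo-type invariance theorem plus uniqueness of solutions (the Lipschitz property), and the paper leaves this implicit.

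You instead compute $\dot g$ everywhere and factor it as $\dot g=h\,g$, with $h$ continuous along the trajectory. Then $g(t)=g(0)\exp\bigl(\int_0^t h\bigr)$, so the sign of $g$ is preserved by an explicit formula, with no appeal to a boundary-tangency principle. Your route is self-contained and gives more: each of $\{g<0\}$, $\{g=0\}$ and $\{g>0\}$ is invariant. This is exactly what Lemma~\ref{lem: m3_exact} later relies on when it asserts that $\overline{U}_D$ and the separatrix itself are invariant.

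The paper's version is shorter, because it only needs the payoff equality on the hyperplane. Yours needs the global identity $(A\mathbf{x})_D-(A\mathbf{x})_{LC}=c\,g$ with $c\neq 0$, which you supply.
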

\begin{proof}
    \begin{align*}
        \frac{d}{dt}\sum x_i &= \sum\frac{d}{dt}x_i\\
        &= \sum x_i[(A\mathbf{x})_i -xA\mathbf{x}]\\
        &= 0.\\
        \frac{d}{dt} x_i &= x_i[(A\mathbf{x})_i -xA\mathbf{x}]\\
        &= 0.
    \end{align*}
Let $g(x) = v \cdot x$. The derivative along the boundary: $x_D = \varphi x_{LC}$
    \begin{align*}
        \frac{dg}{dt} &= \dot{x}_D - \varphi \dot{x}_{LC}  \\
        &= x_D[(A\mathbf{x})_D - xA\mathbf{x}] - \varphi x_{LC} [(A\mathbf{x})_{LC} - xA\mathbf{x}]\\
        &= \varphi x_{LC} [(A\mathbf{x})_D - (A\mathbf{x})_{LC}]\\
        &= 0.
    \end{align*}
The last line follows from equality of payoff on the boundary.
\end{proof}
Since $s_D$ is not in the set $U$, the basin of attraction is bounded above by the complement of this set. Specifically, since \eqref{eq: replicator} is Lipschitz, any trajectory such as $v \cdot x \leq 0$ cannot converge to $s_D$. This enables a sufficient condition for cooperation to emerge: the population distribution starts within the set $U$. Importantly, the size of this set is not static. In fact, we can show that the set converges to the entire simplex as the strategy length $m$ increases. As a consequence, the relative basin of attraction for the defection sequence shrinks to zero.
To show this, we show that the probability of sampling anywhere in its basin tends to zero in the limit of the strategy length and discount factor.
\begin{theorem}\label{thm: basin of s_D shrinks to zero}
    For any valid initialisation of $B$, $\lim_{m\rightarrow\infty, \gamma \rightarrow 1} |\text{BoA}(s_D)| = 0$. That is, as strategy length increases the relative size of the basin of attraction shrinks to zero.
\end{theorem}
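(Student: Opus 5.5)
The plan is to trap $\text{BoA}(s_D)$ inside the complement of the positively invariant set $U$ from Proposition \ref{prop: positively invariant}, and then show that this complement has vanishing volume. Concretely, since $U$ is positively invariant and $s_D\notin U$ (because $v\cdot s_D = 1>0$), no trajectory starting in $U$ can converge to $s_D$. Strictly, a trajectory starting on the face $x_{LC}=0$ with $x_D>0$ could approach $s_D$, but that face has measure zero. Hence
\[
\text{BoA}(s_D)\subseteq \Delta\setminus U = \{x\in\Delta : x_D > \varphi(\gamma,B,m)\,x_{LC}\}.
\]
So it suffices to show that the normalised volume of $\{x\in\Delta: x_D>\varphi x_{LC}\}$ tends to $0$ in the stated limit.

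First I would compute this volume exactly. Take $x$ uniform on $\Delta$, i.e.\ Dirichlet$(1,\dots,1)$ with $d=2^m$ coordinates. The probability depends only on the pair $(x_D,x_{LC})$. Conditional on $x_D+x_{LC}$, the ratio $x_D/(x_D+x_{LC})$ is uniform on $[0,1]$ by the aggregation property of the Dirichlet distribution. Therefore, for $\varphi>0$,
\[
\mathbb{P}(x_D>\varphi x_{LC}) = \mathbb{P}\!\left(\tfrac{x_D}{x_D+x_{LC}}>\tfrac{\varphi}{1+\varphi}\right) = \frac{1}{1+\varphi},
\]
which is independent of $d$ and therefore of $m$. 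This reduces the problem to showing $\varphi(\gamma,B,m)\to\infty$.

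Next I would evaluate the limit of \eqref{eqn: psi the seperatrix}. With $\gamma\to1$, we have $\sum_{j=1}^{m-1}\gamma^j\to m-1$, so
\[
\varphi \to \frac{(R-P)(m-1)+(R-T)}{P}.
\]
This diverges as $m\to\infty$ because $R>P$ and $P>0$. The positivity of $P$ follows from $P>S=0$, and $T$ is fixed and finite. For the joint limit I would note that $\sum_{j=1}^{m-1}\gamma^j=\gamma(1-\gamma^{m-1})/(1-\gamma)$, which grows without bound provided $\gamma\to1$ and $m\to\infty$ together. This holds along any path, since for fixed $M$ and $m\ge M$ we have $\sum_{j=1}^{m-1}\gamma^j\ge \sum_{j=1}^{M-1}\gamma^j\to M-1$. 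Once $\varphi>0$, the probability bound $1/(1+\varphi)\to0$ gives $|\text{BoA}(s_D)|\le 1/(1+\varphi)\to 0$.

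The main obstacle is the containment step rather than the computation. I need to justify the use of $U$ carefully. Proposition \ref{prop: positively invariant} only shows that $dg/dt=0$ on the boundary hyperplane, so I would strengthen this by observing that the ratio $x_D/x_{LC}$ obeys \eqref{eq: xd/xlc}. By Lemma \ref{lem: separatrix splits Ax}, this ratio is nonincreasing whenever $x_D\le\varphi x_{LC}$, so the ratio stays bounded by $\varphi$. Convergence to $s_D$, by contrast, would require $x_D/x_{LC}\to\infty$. Uniqueness of solutions, which holds because the vector field is Lipschitz, closes the argument.

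A secondary point is the meaning of ``relative size''. I would interpret it as the normalised Lebesgue measure on $\Delta$, which is the uniform sampling probability, consistent with the paper's remark about sampling.
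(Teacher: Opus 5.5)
Your proposal is correct and follows essentially the same route as the paper: you confine $\text{BoA}(s_D)$ to the complement of the invariant set $U$, show that its uniform-sampling probability is $1/(1+\varphi)$ independently of $d$, and let $\varphi\to\infty$. The only differences are minor: you compute that probability through the Beta$(1,1)$ law of $x_D/(x_D+x_{LC})$ rather than the paper's exponential representation, and you add a welcome tightening of the joint limit and of the invariance argument via uniqueness of solutions.
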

\begin{proof}
Let $Y_1,...,Y_d \stackrel{i.i.d.}{\sim} \text{Exp}(1)$, with $T = \sum_{i=1}^d Y_i$ and $X_i = \frac{Y_i}{T}$. Then $x =[X_1,...,X_d]$ is uniformly distributed on $\Delta^{d-1}$. The probability of sampling a point inside of $U$ (defined in Proposition \ref{prop: positively invariant}) is:
\begin{align*}
    1-P[X_{s_D} \geq \varphi X_{s_{LC}}]&=1-P[Y_{S_D} \geq \varphi Y_{s_{LC}}]\\
    &= 1-\int_0^\infty P[Y_{s_D}>\varphi t]f_{Y_{LC}}(t) dt\\
    &= 1- \int_0^\infty e^{-\varphi t}e^{-t} dt\\
    &=1-\frac{1}{1+\varphi}
\end{align*}
Since $\lim_{m\rightarrow \infty, \gamma\rightarrow 1}\varphi =\infty $, the probability of sampling inside the set $U$ tends to one. By Proposition \ref{prop: positively invariant}, any trajectory which starts in $U$ will stay in $U$. Hence the relative size of the basin for $s_D \not\in U$ shrinks to zero. 
\end{proof}
This provides an important asymptotic result for avoiding defection: adding degrees of freedom in the strategy space is sufficient to avoid pure defection altogether. In practice, this level of strategic complexity can be unrealistic and computationally infeasible to learn. When shorter finite-length sequences are considered, we provide a comprehensive analysis of the parameter space to analyse the existence and stability of cooperative equilibria. 

\subsection{Parameter Influence on Sequences}
Whilst the separatrix $\varphi$ provides an upper bound for the basin of attraction for $s_D$, it does not provide an exact separation for all basins. In high dimensions, there are many non-linear boundaries. However we can provide an explicit separation for when the sequence length is set to 3, providing the necessary and sufficient conditions for cooperation.
\begin{lemma}\label{lem: m3_exact}
    Let $m=3$. If $T+P\geq 2R$, then $\varphi$ is the exact manifold which separates the simplex. Specifically, $$\lim_{t\rightarrow \infty} \mathbf{x} = 
    \begin{cases}
     D, & x_{D} > x_{LC}\varphi(\gamma,B,3)\\
     \text{Stays  on } \varphi, &  x_{D} = x_{LC}\varphi(\gamma,B,3)\\
     LC, &  x_{D} < x_{LC}\varphi(\gamma,B,3)
    \end{cases}$$
\end{lemma}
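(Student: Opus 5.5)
The plan is to reduce the dynamics on the eight-vertex simplex ($m=3$, $d=8$) to the two-dimensional competition between $s_D=(D,D,D)$ and $s_{LC}=(D,D,C)$. Everything else should be transient. The first step is to use the condition $T+P\geq 2R$, exactly as in Lemma \ref{lem: m2_dd}: any stable sequence must begin with at least $\lfloor\frac{T-R}{R-P}+1\rfloor\geq 2$ defections. The only candidates for pure Nash equilibria are therefore $DDD$ and $DDC$.

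I would then show that the six remaining strategies (those with $s_1=C$, or $s_1=D,s_2=C$) are eliminated along every interior trajectory. For this I would look for dominance relations in the $8\times 8$ matrix \eqref{eq: payoff_nf}, in the limit $\gamma\to 1$ used throughout.

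\begin{itemize}
\item Against any opponent beginning with $D$, a sequence beginning with $C$ earns $S=0$ per restart cycle. The $D$-started sequence with the same tail earns at least $P$.
\item Against opponents beginning with $C$, switching the first action to $D$ gains $T$ per cycle.
\end{itemize}

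So every $C\ast\ast$ strategy should be strictly dominated by a suitable $D\ast\ast$ strategy, or at least dominated so that the ratio argument of \eqref{eq: xd/xlc} applies. The ratio $x_{C\ast\ast}/x_{D\ast\ast}$ then decays exponentially, and the standard result \cite{Hofbauer_Sigmund_1998} gives that strictly dominated strategies vanish.

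At the second level, $DC\ast$ against $DD\ast$ reduces to the $m=2$ comparison of Table \ref{tab:nfg_infinite} shifted by one step. The condition $T+P\geq 2R$ is exactly what makes the $DD$ prefix beat the $DC$ prefix, possibly only weakly. I expect this second-level step to be the main obstacle. Dominance may fail against particular opponents, notably when $DC\ast$ meets itself and earns $R$ from step two onward. In that case strict dominance is unavailable, and I would fall back on an iterated argument. First, the $C$-started strategies die out. On the resulting face, the $DC\ast$ strategies become dominated once the total mass of $DC\ast$ is small, because their advantage against themselves is weighted by their own frequency. This gives a Lyapunov-type function, e.g.\ $\log x_{DCC}+\log x_{DCD}-\log x_{DDC}-\log x_{DDD}$, whose derivative is eventually negative. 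The boundary case $T+P=2R$ will need care, since inequalities become equalities there.

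With only $DDD$ and $DDC$ surviving, the $\omega$-limit set lies on the edge between them. On that edge the dynamics are the two-strategy replicator equation, and its interior rest point is exactly the separatrix point from Lemma \ref{lem: separatrix splits Ax}. That edge point is unstable, so the edge dynamics are bistable.

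By Proposition \ref{prop: positively invariant} (and its mirror version with the reversed inequality), the half-spaces $x_D<\varphi x_{LC}$, $x_D>\varphi x_{LC}$ and the hyperplane $x_D=\varphi x_{LC}$ are each positively invariant. Moreover, on each side the sign of $(A\mathbf{x})_D-(A\mathbf{x})_{LC}$ is fixed by Lemma \ref{lem: separatrix splits Ax}, so the ratio $x_D/x_{LC}$ is monotone by \eqref{eq: xd/xlc}. This gives the three cases:
\begin{itemize}
\item Above the separatrix, the ratio increases, so the limit on the edge can only be $D$.
\item Below it, the ratio decreases, so the limit is $LC$.
\item On it, the trajectory stays on the hyperplane and converges to the mixed edge point.
\end{itemize}
This matches "stays on $\varphi$".
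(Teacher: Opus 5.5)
\textbf{There is a genuine gap in your middle stage: the elimination of $DCC$ and $DCD$ is never proved.} Your first stage (the $C\ast\ast$ strategies die out) and your last stage (three cases from the monotone ratio $x_D/x_{LC}$ and the invariant half-spaces) are sound. In between, you assert that $T+P\ge 2R$ makes the $DD$ prefix beat the $DC$ prefix, then hedge that this may fail when $DC\ast$ meets itself. Your fallback is that the $DC\ast$ strategies ``become dominated once their total mass is small''. That is circular: nothing forces the mass to become small, and the argument never uses the hypothesis.

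It cannot work as a method, because the decisive quantity is $(T-R)-\gamma(R-P)$. When $\gamma(R-P)>T-R$, which is possible only if $T+P<2R$, $DCC$ beats every deviation against itself. It is then a strict Nash equilibrium and locally attracting by Proposition~\ref{prop: strict nash are exponentially stable}. So no argument that ignores the hypothesis can eliminate it. Your sum-of-logs function is also too weak. Even with an eventually negative derivative, it tending to $-\infty$ only forces the product $x_{DCC}x_{DCD}$ to vanish, not each factor.

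\textbf{The fix is a direct computation at fixed $\gamma$.} This is the right setting anyway, since the statement uses $\varphi(\gamma,B,3)$ at a fixed $\gamma$. Your limit $\gamma\to1$ is what creates the worry about $T+P=2R$: there the normalised margin below tends to $(T+P-2R)/2$. The only comparisons that use the hypothesis are those against $DCC$:
\begin{align*}
A(DDC,DCC)-A(DCC,DCC)&=\frac{\gamma\,[(T-R)-\gamma(R-P)]}{1-\gamma^{2}},\\
A(DDC,DCC)-A(DCD,DCC)&=\frac{\gamma(1-\gamma)\,[(T-R)-\gamma(R-P)]}{(1-\gamma^{2})(1-\gamma^{3})}.
\end{align*}
Both are strictly positive for every $\gamma\in(0,1)$ once $T-R\ge R-P$, including the boundary $T+P=2R$. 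The remaining comparisons are routine:
\begin{itemize}
\item Against opponents opening with $C$, $DC\ast$ and $DDC$ tie at $T/(1-\gamma)$.
\item Against $DDD$ and $DDC$, $DC\ast$ earns $P/(1-\gamma^2)$, which is less than $P(1+\gamma)/(1-\gamma^3)\le A(DDC,\cdot)$.
\item Against $DCD$, $DDC$ earns $(P+\gamma T)/(1-\gamma^2)$, which beats both $DC\ast$ strategies unconditionally.
\end{itemize}
Hence $(A\mathbf x)_{DDC}-(A\mathbf x)_{DC\ast}\ge c\sum_{s_1=D}x_s$ for some $c>0$. Your first stage drives $\sum_{s_1=D}x_s\to1$, so each ratio $x_{DC\ast}/x_{DDC}$ eventually decays exponentially.

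\textbf{Comparison with the paper.} Once repaired, your route is a genuine alternative. The paper packs the same pairwise comparisons into one claim: $s_{LC}$ weakly dominates every strategy other than $s_D$, i.e.\ $(A\mathbf x)_i\le(A\mathbf x)_{LC}$ for $i\ne D$. It then runs LaSalle on $V_D=\sum_{i\ne D}x_i/x_D$. Your iterated elimination plus the monotone edge ratio avoids LaSalle. It also gives a sharper boundary case: trajectories on the separatrix converge to the mixed edge point, rather than merely staying on $\varphi$.
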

\begin{proof}
Since $T+P\geq 2R$, by Lemma \ref{lem: num_nash} there are exactly two Nash equilibria (D) and (LC).  Define $U_D = \{\mathbf{x}\in \Delta : (A\mathbf{x})_D-(A\mathbf{x})_{LC} >0\}$. For all strategies on the interior of the simplex, the log-ratio is given by
\begin{align*}
    \frac{d}{dt}\ln\frac{x_D}{x_{LC}} = (A\mathbf{x})_D - (A\mathbf{x})_{LC},
\end{align*}
which is strictly positive on $U_D$ and zero on the boundary $\varphi$. Note that $\overline{U}_D$ is positively invariant (similar proof to Proposition \ref{prop: positively invariant}). Consider the Lyapunov function $V_D( \mathbf{x}) = \frac{\sum_{i\neq D}x_i}{x_D}$.
\begin{align}
    \nabla V_D \cdot \dot{\mathbf{x}} &= \sum_{i\neq D}\frac{x_i}{x_D}[(A\mathbf{x})_i - \mathbf{x}A\mathbf{x}] - \frac{1-x_D}{x_D^2} x_D [(A\mathbf{x})_D - \mathbf{x}A\mathbf{x}]\\
    &= \sum_{i\neq D}\frac{x_i}{x_D}(A\mathbf{x})_i - \frac{1-x_D}{x_D}(A\mathbf{x})_D \\
    &\leq \sum_{i\neq D}\frac{x_i}{x_D}(A\mathbf{x})_{LC} - \frac{1-x_D}{x_D}(A\mathbf{x})_D \\
    &= \frac{1-x_D}{x_D}(A\mathbf{x})_{LC} - \frac{1-x_D}{x_D}(A\mathbf{x})_D \\
    &= \frac{1-x_D}{x_D}[(A\mathbf{x})_{LC} - (A\mathbf{x})_D] <0.
\end{align}
The third line follows from the dominance of $s_{LC}$ over the other non-defecting strategies, which follows from exploiting any sequence which cooperates before the final step. Equality to zero only occurs on $E=\varphi \cup \{s_D\}$. By LaSalle's invariance principle on $\overline{U}_D$, any trajectory starting within $U_D$ converges to the largest invariant set in $E$. For any strategy $i \neq D$ with $\mathbf{x} \in U_D$,
\begin{align*}
    \frac{d}{dt}\ln\frac{x_D}{x_i} &= (A\mathbf{x})_D - (A\mathbf{x})_{i},\\
    &\geq (A\mathbf{x})_D - (A\mathbf{x})_{LC}\\
    &>0
\end{align*}
which means $x_D(t) \rightarrow 1$ (i.e., the strategy $s_D$). Hence all trajectories starting in $U_D$ converges to $s_D$. The derivation is identical for $V_{LC}( \mathbf{x}) = \frac{\sum_{i\neq LC}x_i}{x_{LC}}$. On the boundary, the derivative of the ratio is exactly zero. Since $\varphi$ is invariant, any trajectory on $\varphi$ remains on $\varphi$.
\end{proof}
This result provides an interesting analytical separation for what a given population will converge to, and prompts questions on the effect of different learning rules on the size of the basin of attraction.
\begin{figure}[H]
    \centering
    \begin{subfigure}[t]{0.4\textwidth}
        \centering
        \includegraphics[width=\textwidth]{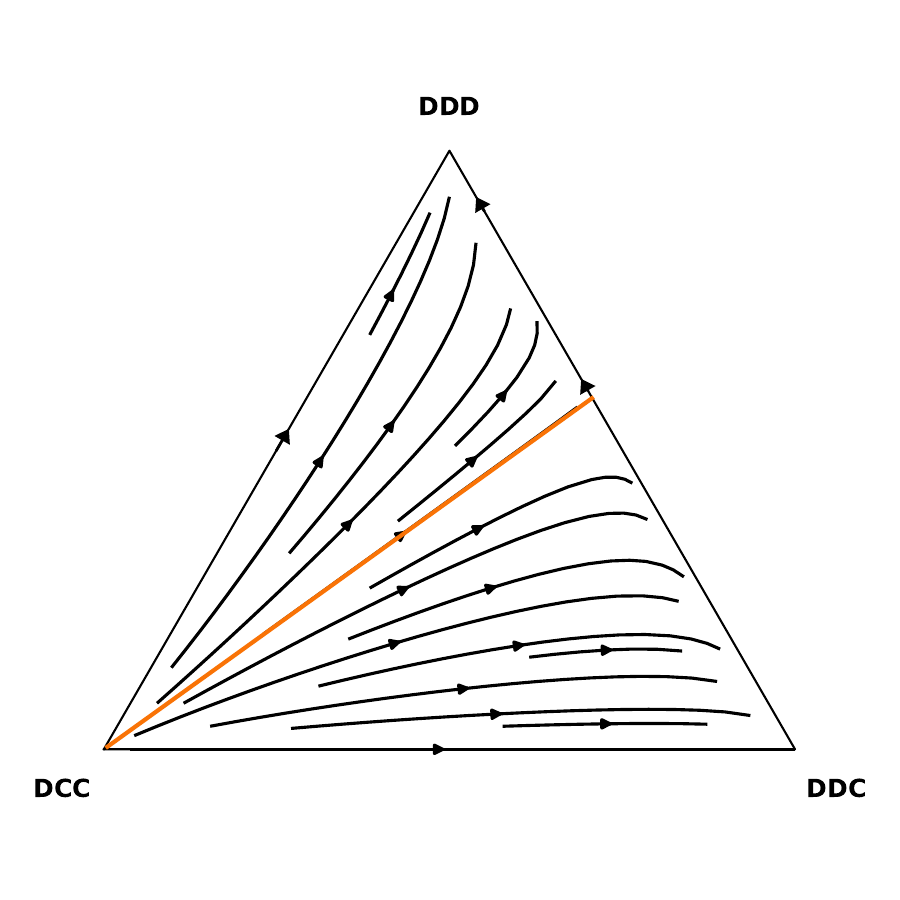}
        \caption{R=3}
        \label{fig: phase_r3}
    \end{subfigure}
    \begin{subfigure}[t]{0.4\textwidth}
        \centering
        \includegraphics[width=\textwidth]{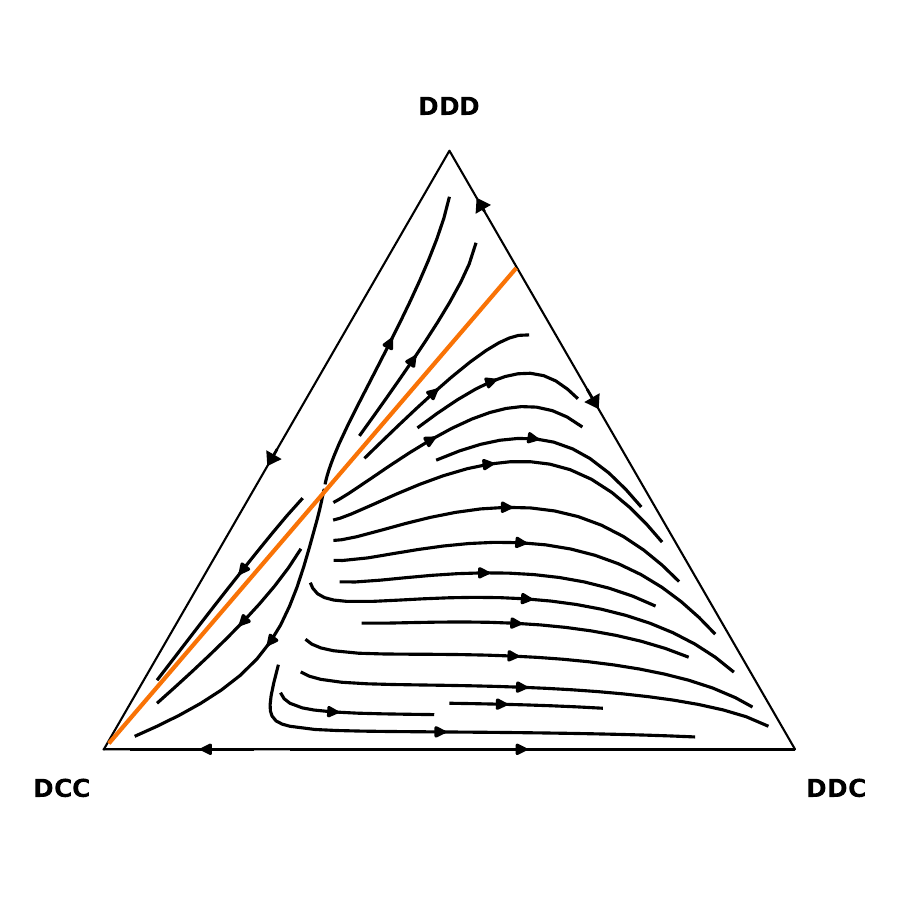}
        \caption{R=4}
        \label{fig: phase_r4}
    \end{subfigure}
    \caption{Phase plot of the projected dynamics for $m=3$. The cases of $T+P\geq 2R$ and $T+P<2R$ are shown; in each, $T=5,P=1$ and $\gamma =0.9$. On the left, a clear linear separation is visible (the manifold $\varphi$ shown in orange), with the simplex divided into two basins. On the right, the addition of another stable sequence adds non-linearity to the vector field.}
    \label{fig: phase}
\end{figure}
The basin separation shown in Figure \ref{fig: phase} switches from a linear to non-linear boundary as the payoff matrix parameter space goes through $T+P=2R$. At this point the number of stable points goes through a critical transition: the number of stable sequences goes from two to three. The dynamics reveal how the separatrix $\varphi$ acts as an upper bound on the basin of attraction for $s_D$. Note that the existence of these cooperative Nash equilibria is also impacted by $\gamma$; to be stable the reward for deviation must be strictly negative, and thus agents must value the final reward of cooperation sufficiently high to avoid early defections.

\begin{figure}[H]
     \centering
\includegraphics[width=.75\textwidth]{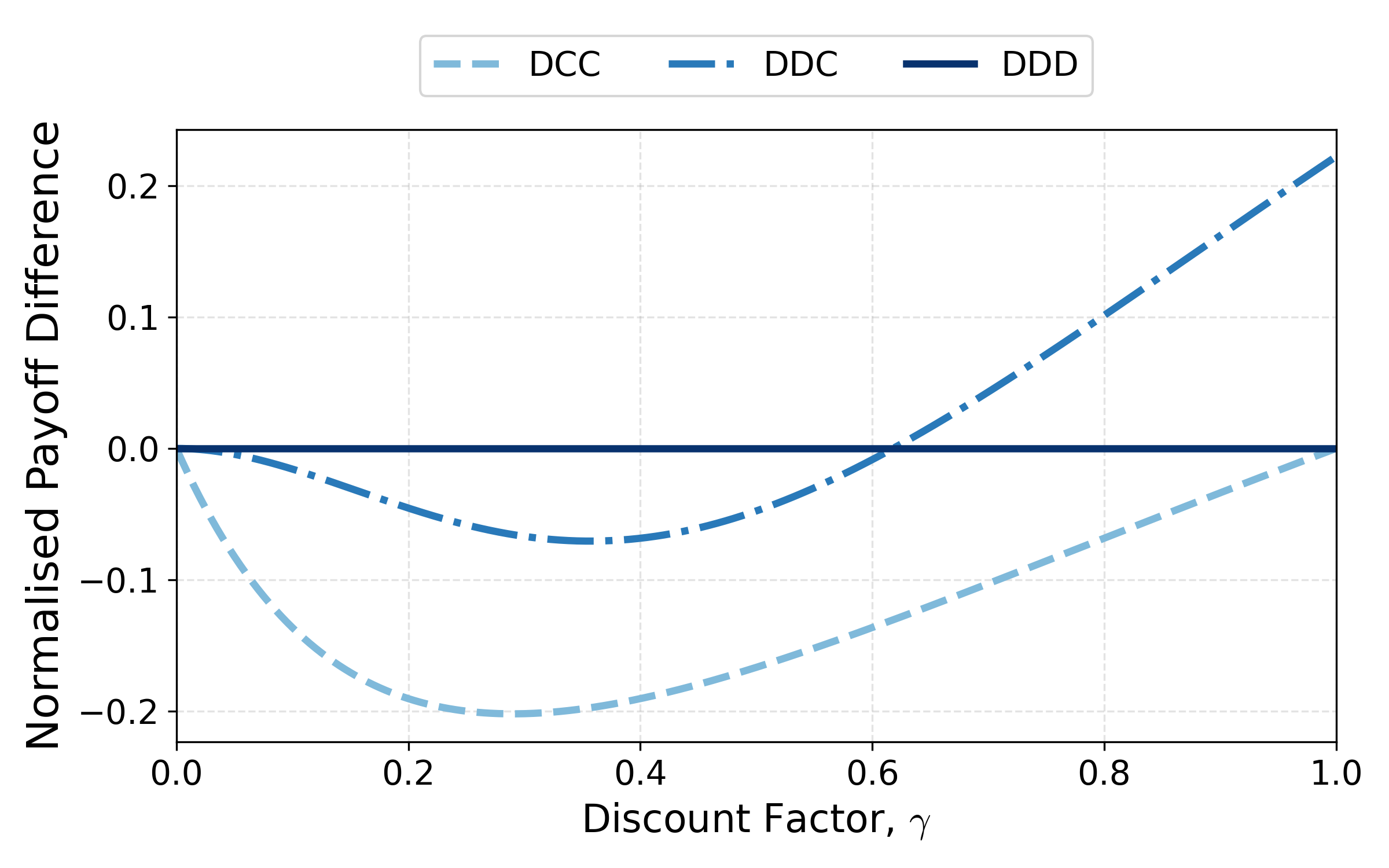}
     \caption{Payoff difference against a DDD population strategy as a function of the discount rate, normalised by the reward for continuing on the sequence. The sequence length is $m=3$, and the base payoff matrix is $R = 3, T = 5, P = 1$. The payoff difference must be non-negative for stability, with a transition through zero indicating a switch as $\gamma$ is varied.}
     \label{fig: reward_vs_gamma}
\end{figure}

As the discount rate is increased, the number of stable sequences increases from one to two, with the stability of the cooperative strategy (DDC) undergoing a bifurcation, shown in Figure \ref{fig: reward_vs_gamma}. This transition occurs at $$\gamma^* = \frac{\sqrt{1+4\frac{T-R}{R-P}}-1}{2},$$ which is the exact root of $\varphi$. Consequently, in Lemma \ref{lem: m3_exact}, $\varphi$ is negative for all $\gamma < \gamma^*$, ensuring the population will converge to the defection sequence for any initial distribution in the interior of the simplex. In higher dimensions, $\varphi$ becomes an $(m-1)$-dimensional polynomial and hence these critical points become non-trivial to find analytically. We can, however, provide an existence guarantee on this transition to stability.
\begin{lemma}\label{lem: psi positive}
        For any initialisation of $B$, there exists an $(m,\gamma) \in \mathbb{N}\times (0,1)$ such that $\forall \gamma \geq \gamma^*,\;\;\varphi(\gamma,B,m) \geq 0$. 
\end{lemma}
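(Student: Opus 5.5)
The plan is to use the explicit separatrix formula \eqref{eqn: psi the seperatrix} and exploit monotonicity in $\gamma$. Since $P>0$ under our normalisation $S=0$ with $T>R>P>S$, the sign of $\varphi(\gamma,B,m)$ equals the sign of the numerator
\[
N_m(\gamma) = (R-P)\sum_{j=1}^{m-1}\gamma^j + (R-T).
\]
With $R-P>0$, $N_m$ is a polynomial in $\gamma$ with nonnegative coefficients on every positive power. It is therefore continuous and strictly increasing on $(0,1)$ whenever $m\geq 2$. It is also nondecreasing in $m$ for fixed $\gamma$, because each increment adds a term $(R-P)\gamma^{m}>0$.

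The first step is to produce a single pair $(m,\gamma^*)$ at which $N_m(\gamma^*)\geq 0$. Letting $\gamma\to1$ gives $N_m(\gamma)\to (R-P)(m-1)+(R-T)$. Choose
\[
m > 1+\frac{T-R}{R-P},
\]
for instance $m=\lfloor \frac{T-R}{R-P}\rfloor+2$, which is finite for any valid $B$. Then this limit is strictly positive.

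By continuity of $N_m$ at $\gamma=1$ (it is a polynomial), there is some $\gamma^*\in(0,1)$ with $N_m(\gamma^*)>0$. Concretely, one can take $\gamma^*$ to be the unique root of $N_m$ in $(0,1)$ if $N_m(0)=R-T<0$, which always holds. Existence and uniqueness of this root follow from the intermediate value theorem and strict monotonicity. This recovers the $m=3$ root $\gamma^*=\frac{\sqrt{1+4\frac{T-R}{R-P}}-1}{2}$ stated above as a special case.

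The second step is to extend from $\gamma^*$ to all $\gamma\geq\gamma^*$. Monotonicity gives $N_m(\gamma)\geq N_m(\gamma^*)\geq 0$ for every $\gamma\in[\gamma^*,1)$. Dividing by $P>0$ then yields $\varphi(\gamma,B,m)\geq 0$, as claimed.

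As a remark, monotonicity in $m$ shows the same $\gamma^*$ works for all larger $m$. This is consistent with Theorem~\ref{thm: basin of s_D shrinks to zero}, where $\varphi\to\infty$.

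The only delicate point is the choice of $m$: it must be large enough that the $\gamma\to1$ limit of the numerator is strictly positive, not merely nonnegative, so that continuity yields a $\gamma^*$ strictly below $1$. Using the strict inequality on $m$ handles this. The remaining steps are routine monotonicity and sign arguments.
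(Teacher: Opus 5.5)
Your proof is correct and follows essentially the same route as the paper. Both arguments use monotonicity and continuity of $\varphi$ in $\gamma$, the sign change between $\gamma\to0^+$ (value $(R-T)/P<0$) and $\gamma\to1^-$ (value $((m-1)(R-P)+(R-T))/P$), the choice $m>\frac{T-R}{R-P}+1$, and the intermediate value theorem followed by monotonicity. The only cosmetic difference is that you evaluate the polynomial numerator directly at $\gamma=1$, whereas the paper applies L'H\^opital's rule to the closed-form geometric sum; your way is slightly cleaner.
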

\begin{proof}
We first note that since $R-P>0$, $\varphi(\gamma,B,m)$ is monotone increasing in both $m$ and $\gamma$, and continuous in $\gamma$. Consider the left and right limits of the separatrix:
\begin{align*}
    \lim_{\gamma \rightarrow0^+} \varphi &= \frac{R-T}{P} <0,\\
    \lim_{\gamma \rightarrow1^-} \varphi&=\frac{1}{P}\bigg(\lim_{\gamma \rightarrow1^-}\frac{\gamma(1-\gamma^{m-1})}{1-\gamma}(R-P) + (R-T)\bigg).\\
    \intertext{By L'Hopitals Rule,}
    &= \lim_{\gamma \rightarrow1^-}\frac{1-m\gamma^{m-1}}{-1}\frac{(R-P)}{P} + \frac{(R-T)}{P},\\
    &= \frac{(m-1)(R-P ) + (R-T)}{P}.
\end{align*}
For a root to exist in $(0,1)$, $\varphi(1^-) >0$. Therefore, let \begin{equation*}
    m> \frac{T-R}{R-P} +1,
\end{equation*}
then by intermediate value theorem there exists a $\gamma \in (0,1)$ with $\varphi(\gamma,B,m) = 0$. The result follows from the monotonicity of $\varphi$ in $\gamma$.
\end{proof}

Since $\varphi$ creates a separation of basins on the $s_D$ and $s_{LC}$ face of the simplex, it provides an explicit condition for the negativity of the largest eigenvalue at the pure strategy $s_{LC}$, dictating the stability of this minimal cooperating sequence. In the infinite limit of the sequence length, the critical point becomes
\begin{equation*}
    \gamma^* = \frac{T-R}{T-P}.
\end{equation*} One can easily verify that no cooperative sequence can be stable unless $s_{LC}$ is also stable, hence this provides a global necessary and sufficient condition on $\gamma$ for the existence of stable sequences beyond that of pure defection. 
\begin{lemma}\label{lemma: stable iff LC stable}
    A stable cooperative sequence exists if and only if the last step cooperator $s_{LC}$ is stable.
\end{lemma}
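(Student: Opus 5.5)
The plan is to work throughout with the vertex characterisation of stability. For the replicator dynamics \eqref{eq: replicator}, the Jacobian at a pure strategy $s$, restricted to the tangent space $T_x\Delta$, has eigenvalues $A(t,s)-A(s,s)$ for $t\neq s$. So by Lemma \ref{lemma: exponential stability}, $s$ is (exponentially) stable iff $s$ is a strict Nash equilibrium, i.e.\ $A(s,s)>A(t,s)$ for all $t\neq s$. By a \emph{cooperative} sequence I mean one with $s_m=C$, so that it cooperates indefinitely. The direction ($\Leftarrow$) is then immediate: $s_{LC}$ is itself cooperative, so if it is stable a stable cooperative sequence exists.

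For ($\Rightarrow$), I first reduce stability of $s_{LC}$ to a single deviation. Let $t\neq s_{LC}$ deviate first at step $i=\tau(t,s_{LC})$.
\begin{itemize}
\item If $i<m$, then $t_i=C$ against $D$. The row player collects $P$ on steps $1,\dots,i-1$ and $S=0$ at step $i$, and this block repeats forever. Hence $A(t,s_{LC})<P/(1-\gamma)$.
\item On the other hand, $A(s_{LC},s_{LC})=P\frac{1-\gamma^{m-1}}{1-\gamma}+\gamma^{m-1}\frac{R}{1-\gamma}>P/(1-\gamma)$, since $R>P$.
\item The only remaining deviation is $i=m$, i.e.\ $t=s_D$.
\end{itemize}
Therefore $s_{LC}$ is stable iff $A(s_{LC},s_{LC})>A(s_D,s_{LC})$. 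From the computation leading to \eqref{eqn: psi the seperatrix} and Lemma \ref{lem: separatrix splits Ax}, this is equivalent to $\varphi(\gamma,B,m)>0$.

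Now suppose $s$ is a stable cooperative sequence. First, $s_1=D$, since otherwise the sequence that defects at step $1$ and then copies $s$ earns $T/(1-\gamma)>R/(1-\gamma)\ge A(s,s)$. Let $k\le m$ be the first index with $s_k=C$, so $s$ begins with a hazing period of $h=k-1\ge1$ defections. Consider the deviator $t$ that agrees with $s$ except that it plays $D$ at step $k$. It earns
\[
A(t,s)=\frac{P\frac{1-\gamma^{h}}{1-\gamma}+\gamma^{h}T}{1-\gamma^{h+1}}.
\]
Since every per-step payoff on the path of $s$ against itself is $R$ or $P\le R$, we have the upper bound
\[
A(s,s)\le P\frac{1-\gamma^{h}}{1-\gamma}+\gamma^{h}\frac{R}{1-\gamma}.
\]
Strictness of $s$ then forces the same inequality that defines stability of the last-step cooperator of length $h+1$. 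By the same algebra as in Lemma \ref{lem: separatrix splits Ax}, this is $\varphi(\gamma,B,h+1)>0$. Finally, $\varphi$ is monotone increasing in $m$ (noted in the proof of Lemma \ref{lem: psi positive}) and $h+1\le m$, so $\varphi(\gamma,B,m)>0$, which is exactly stability of $s_{LC}$.

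The main obstacle is the algebraic identification step: checking that the inequality $A(t,s)<P\frac{1-\gamma^h}{1-\gamma}+\gamma^h\frac{R}{1-\gamma}$ really rearranges to a positive multiple of $\varphi(\gamma,B,h+1)$. Clearing the denominators $(1-\gamma)(1-\gamma^{h+1})$ should leave $(R-P)\sum_{j=1}^{h}\gamma^j+(R-T)>0$ up to a positive factor of $\gamma^h$. I would carry this out carefully, reusing the manipulation already performed for $m$ in deriving \eqref{eqn: psi the seperatrix}. A secondary point is that bounding $A(s,s)$ above by the ``hazing then all-$C$'' value is valid only because $R>P$, so this is where the PD ordering enters.
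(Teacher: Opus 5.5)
Your proposal is correct and is essentially the paper's argument. You take the first cooperative step $k$ of a stable cooperative $s$, bound $A(s,s)$ above by the hazing-then-all-$R$ value, and compare it with the mutant that defects at step $k$ to get $(R-P)\sum_{j=1}^{k-1}\gamma^j>T-R$, then finish by monotonicity of $\varphi$ in $m$; the paper phrases the same steps as a contradiction with instability of $s_{LC}$. Your explicit reduction of the stability of $s_{LC}$ to the single comparison with $s_D$ (i.e.\ to $\varphi(\gamma,B,m)>0$) fills in a step the paper uses without proof.
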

\begin{proof}
    If $s_{LC}$ is stable then trivially a stable cooperative sequence exists. For the backwards direction, suppose that there is a stable cooperative sequence denoted by $s\neq s_D$, whilst $s_D$ is unstable. In particular, if $s_D$ is unstable then $\varphi(\gamma,B,m) <0$ which is equivalent to
    \begin{align*}
        (R-P)\sum_{j=1}^{m-1} \gamma^{j} &< T-R.
    \end{align*}
    Now consider the strategy $s$: there exists a $k \in [0,m-1]$ such that $s_k = C$. Since $s$ is stable, the payoff $A(s,s)>A(s,s')$ for all $s' \in S$. In particular, a payoff upper bound of $A(s,s)$ must beat a mutant strategy which defects on the $k$th step:
    \begin{align*}
        \sum_{i=1}^{k-1}\gamma^{i-1}u_i + \frac{\gamma^{k-1}R}{1-\gamma} &> \frac{\sum_{i=1}^{k-1}\gamma^{i-1}u_i + \gamma ^{k-1}T}{1-\gamma^{k}}\\
       \Leftrightarrow\quad\;\qquad \frac{1-\gamma^{k-1}}{1-\gamma}R &> \gamma\frac{1-\gamma^{k-1}}{1-\gamma}P + T\\
       \Leftrightarrow\qquad(R-P) \sum_{j=1}^{k-1}\gamma^j &> T-R\\
       \Rightarrow\qquad(R-P) \sum_{j=1}^{m-1}\gamma^j &> T-R
    \end{align*}
    which contradicts $s_D$ being unstable.
\end{proof}
So far, we have focused on specific sequences particularly around avoiding defection: Lemma \ref{lem: m3_exact} raises the important point of mixed Nash equilibria and Lemma \ref{lem: psi positive} presents conditions for stability for the LC strategy. This provides a necessary and sufficient condition for the stability of cooperative sequences, shown in Lemma \ref{lemma: stable iff LC stable}. We now turn our attention to more general stability properties of the system.

\section{The Structure and Stability of Stable Sequences}
\label{sect: theory2}
Generally, repeated games can display interesting and stable mixed Nash equilibria. In the Prisoner's Dilemma, no such mixed strategy exists. It is therefore not surprising that in this repeated game with trigger restarts, any mixed Nash equilibrium (NE) with the pure defection strategy in its support is also evolutionarily unstable.

\begin{lemma}\label{lem: mixed nash are not stable}
    Let $p$ be a mixed NE such that $p_{D} >0$. Then $p$ is not an evolutionarily stable strategy (ESS).
\end{lemma}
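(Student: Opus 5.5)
We will show that whenever a mixed Nash equilibrium $p$ has $p_D>0$, the pure defection strategy $s_D$ is itself an alternative best reply that does at least as well against itself as $p$ does, so $s_D$ violates the second ESS condition. Recall that $p$ is an ESS if for every $q\neq p$ either $q^TAp<p^TAp$, or $q^TAp=p^TAp$ and $q^TAq<p^TAq$. Since $p$ is a NE and $s_D\in\operatorname{supp}(p)$, the standard support property gives $(Ap)_D=p^TAp$. So $e_D$ is an alternative best reply, and it suffices to show $e_D^TAe_D\geq p^TAe_D$, i.e. $A(s_D,s_D)\geq \sum_s p_sA(s,s_D)$. Here $p$ is genuinely mixed, so $e_D\neq p$.

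\textbf{Key step.} We claim that $A(s_D,s_D)\geq A(s,s_D)$ for every $s\in\mathcal{S}$, with strict inequality for $s\neq s_D$; this is the fact quoted in the proof of Lemma \ref{lem: m2_dd}, and we verify it from \eqref{eq: payoff_nf}. Take $s\neq s_D$ and let $\tau=\tau(s,s_D)$, the first index with $s_\tau=C$. Up to $\tau-1$ both sequences play $D$ and earn $P$ per round; at step $\tau$ the sequence $s$ earns $S=0$ against $D$, and the game then restarts. Hence
\[
A(s,s_D)=\frac{P\sum_{i=1}^{\tau-1}\gamma^{i-1}}{1-\gamma^{\tau}}=\frac{P(1-\gamma^{\tau-1})}{(1-\gamma)(1-\gamma^{\tau})}<\frac{P}{1-\gamma}=A(s_D,s_D),
\]
since $1-\gamma^{\tau-1}<1-\gamma^{\tau}$ and $P>0$.

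Averaging this inequality over $p$ gives
\[
p^TAe_D=\sum_s p_sA(s,s_D)<A(s_D,s_D)=e_D^TAe_D,
\]
where the inequality is strict because $p$ places positive mass on some $s\neq s_D$. Thus $e_D$ is a best reply to $p$ and does strictly better against itself than $p$ does. This violates the ESS condition, so $p$ is not an ESS. In fact the same inequality shows that $e_D$ invades $p$: in the mixture $(1-\varepsilon)p+\varepsilon e_D$, $s_D$ strictly outperforms $p$ for every $\varepsilon>0$.

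\textbf{Main obstacle.} The only delicate point is the explicit computation of $A(s,s_D)$ under the restart mechanism, which must use the paper's conventions $S=0$ and $P>0$. We also need the equality $(Ap)_D=p^TAp$ for support strategies of a Nash equilibrium. Both are routine, so the argument reduces to the observation that $s_D$ is a strict best reply to itself.
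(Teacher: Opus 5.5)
Your proof is correct and follows the same route as the paper: the support property makes $s_D$ an alternative best reply to $p$, and the second ESS condition then fails because $s_D$ cannot be exploited, i.e.\ $u(p,s_D)\le A(s_D,s_D)=P/(1-\gamma)$. Your explicit computation $A(s,s_D)=P(1-\gamma^{\tau-1})/\bigl((1-\gamma)(1-\gamma^{\tau})\bigr)$ makes rigorous the paper's informal ``Sucker's payoff and restart'' remark and gives strictness, and your observation that $e_D\neq p$ is a point the paper leaves implicit.
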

\begin{proof}
    Suppose for the sake of contradiction that $p$ is ESS. Since $s_D$ is in the support, and every pure strategy in the support attains the same payoff against $p$,
    \begin{align*}
        u(s_D,p) = u(p,p).
    \end{align*}
    Take strategy $q = s_D$, then $u(q,p) = u(p,p)$, and we require the second ESS condition
    \begin{align*}
        u(p,q) &> u(q,q)\\
        \Rightarrow u(p,s_D) &> u(s_D,s_D)
    \end{align*}
    but the strategy $s_D$ cannot be exploited, since playing any move other than defection is met with the Sucker's payoff and the sequence restarts. Hence, $u(p,s_D) \leq  \frac{P}{1-\gamma} =u(s_D,s_D)$. This is a contradiction, so $p$ is not ESS.
\end{proof}
As a result, every Nash equilibrium in the interior of the simplex is not an ESS. Since each sequence defines a vertex of the simplex $\Delta$, it is important to check how this impacts the stability. We consider exponential stability, where any perturbation is diminished exponentially fast. In the context of replicator dynamics, we can show that all strict Nash equilibria are exponentially stable.
\begin{proposition}\label{prop: strict nash are exponentially stable}
    Let $p$ be a strict Nash equilibrium of (\ref{eq: replicator}). Then p is exponentially stable.
\end{proposition}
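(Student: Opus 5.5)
A strict Nash equilibrium of the symmetric game with payoff matrix $A$ is necessarily a pure strategy, so $p=e_k$ is a vertex of $\Delta$ for some sequence $s_k$. Strictness means $A_{kk} > A_{jk}$ for all $j\neq k$. The plan is to linearise \eqref{eq: replicator} at $e_k$, restrict the Jacobian to the tangent space $T_{e_k}\Delta$, and show that it is Hurwitz there. Lemma \ref{lemma: exponential stability} then gives exponential stability, after translating the rest point to the origin in the coordinates $y = x - e_k$.

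\textbf{Step 1: the Jacobian.} Write $f_i(x) = x_i[(A\mathbf{x})_i - \mathbf{x}^TA\mathbf{x}]$. Then
\begin{equation*}
\frac{\partial f_i}{\partial x_j} = \delta_{ij}\big[(A\mathbf{x})_i - \mathbf{x}^TA\mathbf{x}\big] + x_i\big[A_{ij} - (A\mathbf{x})_j - (A^T\mathbf{x})_j\big].
\end{equation*}
Evaluate this at $\mathbf{x}=e_k$, where $(A e_k)_i = A_{ik}$ and $e_k^TAe_k = A_{kk}$.
\begin{itemize}
\item For $i\neq k$ the second term vanishes because $x_i=0$, so row $i$ has the single nonzero entry $J_{ii} = A_{ik} - A_{kk}$.
\item Row $k$ is $J_{kj} = A_{kj} - A_{jk} - A_{kj} = -A_{jk}$ for $j \neq k$, and $J_{kk} = -A_{kk}$.
\end{itemize}

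\textbf{Step 2: restriction to the simplex.} On $\Delta$ we may eliminate $x_k = 1-\sum_{i\neq k}x_i$ and use the $d-1$ coordinates $y_i = x_i$, $i\neq k$, which vanish at the rest point. The reduced vector field is $\dot y_i = y_i[(A\mathbf{x})_i - \mathbf{x}^TA\mathbf{x}]$. Each such component carries the factor $y_i$, so the Jacobian at $y=0$ is diagonal:
\begin{equation*}
\tilde J = \mathrm{diag}\big(A_{ik} - A_{kk}\big)_{i\neq k}.
\end{equation*}
Equivalently, the full Jacobian $J$ leaves $T_{e_k}\Delta$ invariant, and its eigenvalues there are exactly the diagonal entries $A_{ik}-A_{kk}$. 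The remaining eigenvalue $-A_{kk}$ belongs to the direction transverse to the simplex and is irrelevant. Strictness gives $A_{ik}-A_{kk}<0$ for every $i\neq k$, so $\tilde J$ is Hurwitz.

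\textbf{Step 3: conclusion.} Applying Lemma \ref{lemma: exponential stability} to the reduced system in $y$ gives $\|y(t)\|\le c\|y_0\|e^{-\lambda t}$ with
\begin{equation*}
\lambda < \min_{i\neq k}(A_{kk}-A_{ik}).
\end{equation*}
Since $x_k - 1 = -\sum_{i\neq k} y_i$, the same exponential bound transfers to $\|x(t)-e_k\|$.

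The main point requiring care is this reduction to the tangent space. The full $d\times d$ Jacobian has the extra eigenvalue $-A_{kk}$, which in general is not negative; it would be zero, for instance, if $A_{kk}=0$. Stability therefore must be argued on the invariant affine set $\Delta$, using the positive invariance noted after Definition \ref{def: positively invariant}, rather than in all of $\mathbb{R}^d$. Working in the local chart $y$ resolves this cleanly.
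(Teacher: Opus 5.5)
Your proposal is correct and follows the paper's own argument: linearise at the vertex, note that on the tangent space $T_{e_k}\Delta$ the eigenvalues are $A_{ik}-A_{kk}$, use strictness to make them negative, and apply the Hurwitz criterion of Lemma \ref{lemma: exponential stability}. What you add is detail the paper leaves implicit: the explicit Jacobian, the reduction to the chart $y$, and the observation that the transverse eigenvalue $-A_{kk}$ must be discarded.
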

\begin{proof}
      Suppose $p$ is the pure strategy $x_i =1$. Then the eigenvalues $\lambda_j = A_{ji}-A_{ii}$ for all $j \notin \text{supp}(p)$. Since $p$ is a strict Nash, $\lambda_j < 0\quad \forall j \in d \setminus i$. The feasible perturbations lie in the tangent space of the simplex. Therefore, for the dynamics on the tangent space, the Jacobian is Hurwitz and thus exponentially stable.
\end{proof}
This holds true for all systems, not just the repeated game we consider. In this setting, all stable sequences are strict Nash equilibria. Therefore, as a direct corollary, all the stable sequences are exponentially stable. Thus far, we have focused on the specific defection (D) and last-step cooperator strategies (LC) and shown global exponential stability for all stable sequences. We turn our attention to how the whole system evolves, and what structural properties appear in the sequences and distribution of their sizes.

\subsection{Sequence Structure and Distribution}
To analyse the basin of attraction distribution, we first provide an explicit number of stable sequences for any given parametrisation of the payoff matrix. This enables analysis on the structure of stable sequences.
\begin{lemma}\label{lem: num_nash}
    Let $ m\geq 2, \gamma \rightarrow 1$, and $\kappa = \lfloor \frac{T-R}{R-P}\rfloor+2$. For $m\geq \kappa$, there are exactly $1+2^{m-\kappa}$ pure Nash equilibria. If $m < \kappa$, defection is the only Nash Equilibrium.
\end{lemma}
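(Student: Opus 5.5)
The plan is to characterise the symmetric pure Nash equilibria $(s,s)$ directly from \eqref{eq: payoff_nf}, in the limit $\gamma\to1$. In that limit I will compare normalised payoffs $(1-\gamma)A(\cdot,s)$ rather than raw payoffs, since the raw payoffs diverge.

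\textbf{Normalised payoffs.} On the path, $(1-\gamma)A(s,s)\to B_{s_m,s_m}$, which is $R$ if $s_m=C$ and $P$ if $s_m=D$. A mutant $s'$ that first differs from $s$ at step $k$ restarts every $k$ rounds. It therefore earns the cycle average
\[
\frac{\sum_{i<k}u_i+B_{s'_k,s_k}}{k},
\]
where $u_i=B_{s_i,s_i}$ are the on-path stage payoffs. Here $k$ ranges over all positive integers, because a deviation after step $m$ is also possible: $s'$ can switch away from the repeated last action. So $(s,s)$ is a Nash equilibrium iff every such cycle average is at most the on-path limit payoff.

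\textbf{Deviations into cooperation.} If $s_k=D$ and the mutant plays $C$, the mutant receives $S=0$. The cycle average is then at most $\max(R,P)$-weighted by the prefix, and I expect it never beats the limit payoff. I still need to check this against $s_m=D$, which the next step handles.

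\textbf{Sequences ending in defection.} Suppose $s_m=D$ and $s$ contains some $C$. Take any step $k$ with $s_k=C$ and let the mutant defect there. The cycle average is at least $\bigl((k-1)P+T\bigr)/k>P$, so the deviation is profitable. Hence the only equilibrium with $s_m=D$ is $s_D$, which is indeed Nash, as noted in the proof of Lemma~\ref{lem: m2_dd}.

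\textbf{Sequences ending in cooperation.} Now suppose $s_m=C$, so the target payoff is $R$. Consider defecting at a step $k$ with $s_k=C$; this includes every $k>m$. This deviation is unprofitable iff $\sum_{i<k}u_i+T\le kR$, which rewrites as
\[
\sum_{i<k}(R-u_i)\ \ge\ T-R.
\]
Each earlier $D$ contributes $R-P$ to the left-hand side and each earlier $C$ contributes $0$. The condition is therefore that the number of defections before step $k$, multiplied by $(R-P)$, is at least $T-R$.

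The number of earlier defections is smallest at the first $C$ and never decreases afterwards. So the whole family of conditions collapses to a single one: the number $j$ of leading defections must satisfy $j(R-P)\ge T-R$, with strict inequality for a strict equilibrium. Writing $q=\frac{T-R}{R-P}$, strictness gives $j>q$, i.e.
\[
j\ \ge\ \lfloor q\rfloor+1=\kappa-1 .
\]
This is the same counting used in Lemma~\ref{lem: m2_dd}.

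\textbf{Counting.} The admissible sequences are exactly those with $s_1=\dots=s_{\kappa-1}=D$ and $s_m=C$, with $s_\kappa,\dots,s_{m-1}$ free. These exist iff $m\ge\kappa$, and there are $2^{m-\kappa}$ of them. The case $m=\kappa$ gives the single sequence $s_{LC}$, consistent with Lemma~\ref{lem: m3_exact}. Adding $s_D$ gives $1+2^{m-\kappa}$ equilibria. If $m<\kappa$, no sequence ending in $C$ can have $\kappa-1$ leading defections, so defection is the only equilibrium.

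\textbf{Main obstacle.} The delicate part is the tie case, where $q$ is an integer and $j(R-P)=T-R$. Strictness versus weak equilibrium is decided there, and the $\gamma\to1$ limit can hide the sign of the difference. I will first try to resolve it by expanding $A(s,s)-A(s',s)$ to first order in $1-\gamma$. If the sign there is ambiguous, I will adopt the convention, consistent with $\kappa=\lfloor q\rfloor+2$, that ties are not counted as (strict) equilibria.

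I also need to justify that it suffices to consider deviations at the first differing step. This holds because the payoff depends only on $\tau(s,s')$ and on $s'_\tau$, and at step $\tau$ the deviation is just the opposite action.
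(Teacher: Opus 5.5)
Your proof is correct and is essentially the paper's argument. You split on the final action and eliminate every sequence ending in $D$ other than $s_D$ by a defection at a $C$ step. For sequences ending in $C$, you reduce all deviation constraints to the one at the first $C$, which forces at least $\lfloor \frac{T-R}{R-P}\rfloor+1$ leading defections and leaves $2^{m-\kappa}$ free choices.

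Your tie case needs no convention. Take a sequence with $j$ leading defections followed by $C$. For any $\gamma<1$, its exact no-deviation condition at that first $C$ implies $(R-P)\sum_{i=1}^{j}\gamma^{i}\ge T-R$. When $j(R-P)=T-R$ this fails for every $\gamma\in(0,1)$, so such sequences are not even weak equilibria, which is consistent with $\kappa=\lfloor \frac{T-R}{R-P}\rfloor+2$.
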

\begin{proof}
    Let $\delta = T-R$ be the largest marginal value of deviation in a single step and $u_i$ the payoff in step $i$ of a sequence. For a sequence to be stable then $\sum_{i=1}^k u_i + \delta \leq ku_m, \;\; \forall k \in[1,m]$. There are two cases to consider: $u_m =  P$ or $u_m = R$.
    \begin{itemize}
        \item \textbf{Case 1}: $u_m = P$. Suppose not all rewards are $P$. Then to be stable $$kP +\delta \leq\sum_{i=1}^k u_i + \delta < kP$$ which does not holds for any $\delta >0$. Therefore the only stable sequence is if $u_k = P\;\; \forall k \in[1,m]$.
        \item \textbf{Case 2}: $u_m = R$. Let $n_d(k)$ and $n_c(k)$ be the number of Ds and Cs played up to and including step $k$. Then if $n_c(k) =0$, there is no positive deviation and the sequence is stable up to step $k$. Otherwise, $n_c(k) >0$ and the requirement is       
        \begin{align*}
            n_c(k)R + n_d(k)P + \delta &< kR\\
            \delta &< n_d(k)(R-P)\\
            \frac{\delta}{R-P} &< n_d(k)
        \end{align*}
        Since this must hold for all $k$, the first $\lfloor \frac{\delta}{R-P}\rfloor+1$ must be action D. All other actions are arbitrary (except the final one); so there are $2^{m-\kappa}$ sequences ending in $R$, provided $m \geq \kappa$. 
    \end{itemize} 
\end{proof}
The restriction on $m\geq2$ is imposed to obtain an exact formula for the number of stable sequences, with defection as the only Nash equilibrium for $m=1$. The proof steps provide explicit requirements on the matrix for cooperative sequences to be stable. Indeed, the second case highlights a key property of stable sequences: the hazing period. This defines the number of defections which occur at the beginning of the sequence for cooperation to be stable. Intuitively, this must be long enough to make future deviations in the sequence sub-optimal, so as to avoid repeating the hazing. In the 2-action Prisoner's Dilemma, this hazing period becomes a necessary condition and, if the final action is to cooperate, it is also a sufficient condition. This counting provides two points of knowledge: the exact structure of all stable sequences, and the number of basins of attraction. The latter could then be used to inform the error when estimating the basin of attraction sizes.
\begin{figure}[H]
     \centering
\includegraphics[width=.85\textwidth]{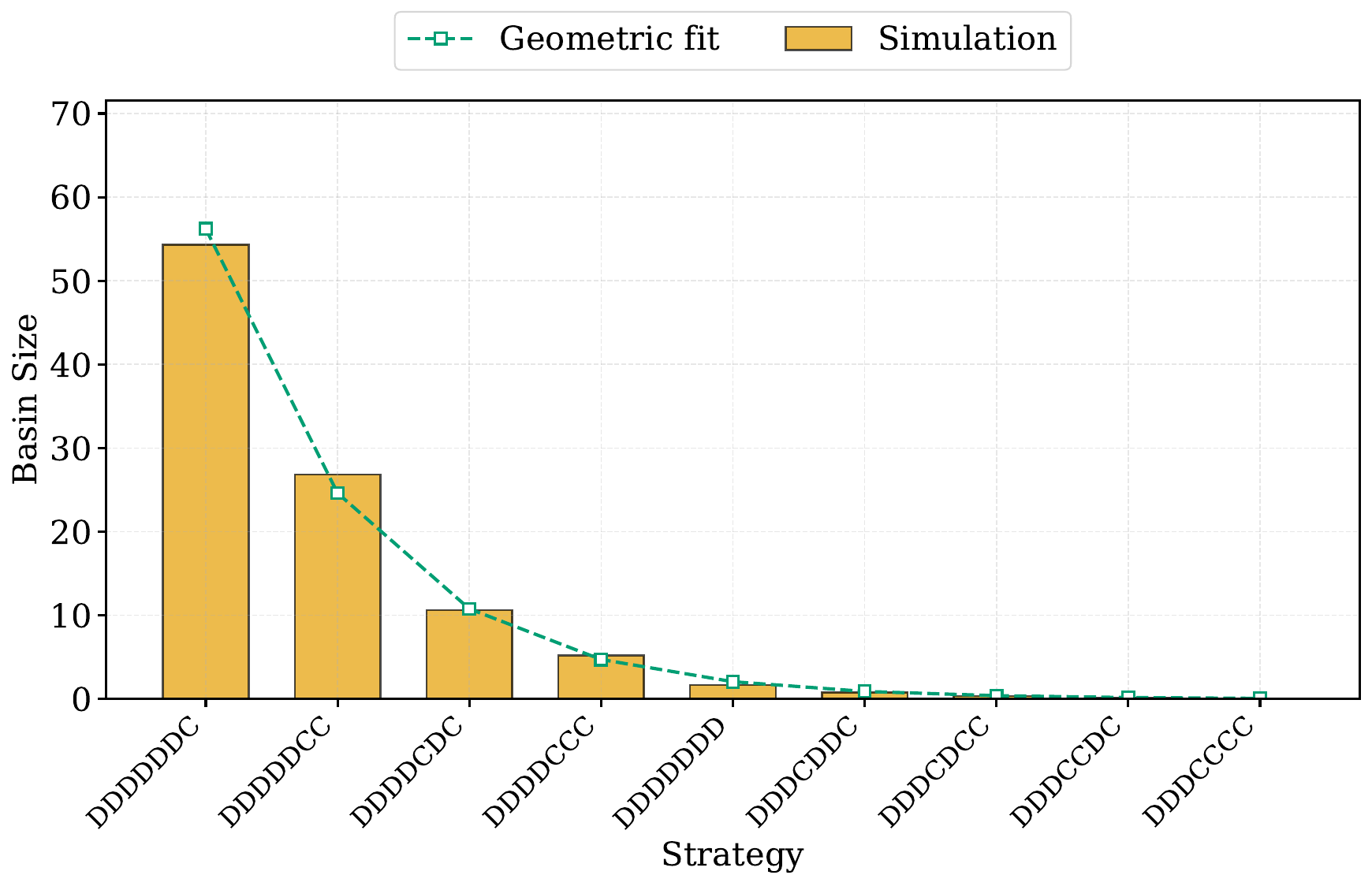}
     \caption{Simulated distribution of BoA sizes for $m=7$. A geometric fit is plotted to illustrate a possible distribution of the basins. 100,000 initial conditions are used to estimate the basin sizes.}
     \label{fig: dist_nash_boA}
\end{figure}

As shown in Figure \ref{fig: dist_nash_boA}, the basin sizes resemble a geometric distribution which is consistent with potential games \cite{collevecchio2024basins}. As shown earlier, however, the PD with restarts is not a potential game (Lemma \ref{lem: potential game}). Without a global potential function, we cannot provide an explicit structure without significant assumptions and simplifications. We can, however, observe that the largest basins have longer hazing periods. This is largely a result of the high dimensionality: there are many stable sequences containing a mix of defection and cooperation actions. It is hard for a more cooperative sequence to compete with a population where defections are common. This is particularly true for the optimal sequence, which maximises the total reward by cooperating after the minimal hazing length. 

As a result, convergence to optimality is only possible for very short sequence lengths, when defection possibilities are limited. Figure \ref{fig: optimal} illustrates the estimated basin size as the sequence length $m$ increases. Intuitively, the drop can be understood as follows: there is only one social optimal sequence; there are $2^{m-\kappa}$ other stable sequences each of which will exploit this optimality by defecting at some time-step. 

\begin{figure}[H]
     \centering
    \includegraphics[width=.75\textwidth]{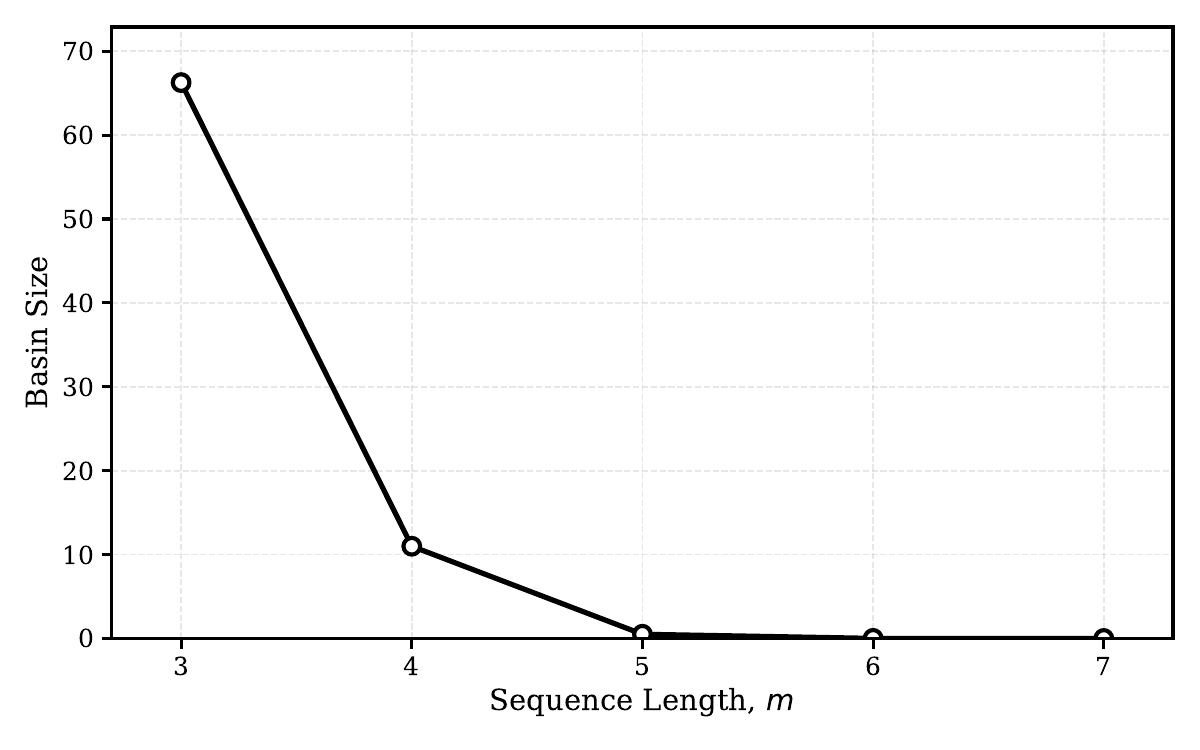}
     \caption{Basin of Attraction of the Optimal Sequence, which is the stable sequence which maximises the payoff. For each, $R=3, T=5, P=1$ and $\gamma=0.99$. The basin is calculated using $100,000$ simulated points.}
     \label{fig: optimal}
\end{figure}

\subsection{Imperfect Restarts}

Thus far, we have considered a perfect restart mechanism, in which agents' sequences are reset upon action misalignment. In this section, we analyse how stable sequences emerge through evolutionary pressure when there are imperfect restarts. In particular, we consider when the interaction restarts with probability $\varepsilon$ even when actions agree. This challenges the goal of stable cooperation, where the hazing period can detract from previously stable cooperative pairings.

The error rate of $\varepsilon$ fundamentally alters the payoff structure in the normal form game. At each step, when actions agree, the sequence continues with probability $(1-\varepsilon)$. Define $q=\gamma(1-\varepsilon)$, then the payoff is (see Appendix \ref{secA1})
\begin{equation}\label{eq: payoff_nf imperfect}
A^\varepsilon(s, t) :=
\begin{cases}
\dfrac{\displaystyle (1-q)\sum_{i=1}^{\tau} q^{\,i-1}\,B_{s_i,t_i}}
{(1-\gamma)(1 - q^{\tau})}, 
& \tau(s,t)<\infty, \\[1em]
{\displaystyle \frac{(1-q)\sum_{i=1}^{m-1} q^{i-1} B_{s_i,s_i} + q^{m-1} B_{s_m,s_m}}{1-\gamma}},
& \tau(s,t)=\infty.
\end{cases}
\end{equation}
We note that $\varepsilon=0$ recovers the original payoff in \eqref{eq: payoff_nf}. This structure allows us to re-derive the separatrix, and directly compare against the perfect restart mechanism. 

\begin{align*}
    \varphi^\varepsilon(\gamma,B,m)
    &= \frac{(R-P)\sum_{j=1}^{m-1}q^j+(R-T)}{P}.
\end{align*}
This is the exact separatrix derived for the deterministic system such that $\varphi^\varepsilon(\gamma,B,m) = \varphi(\gamma(1-\varepsilon),B,m)$. As a consequence, the separation defined in Lemma \ref{lem: separatrix splits Ax} and invariance induced by the separatrix in Proposition \ref{prop: positively invariant} follow immediately. However, the large $m$ behaviour can change significantly. As $m \rightarrow \infty, \gamma \rightarrow 1$, the upper bound on the basin of attraction for the strategy $s_D$ does not collapse to zero. The large $m$ limit of the separatrix is
\begin{align*}
    \lim_{m\rightarrow \infty, \gamma \rightarrow1}\varphi^\varepsilon &= \frac{(R-P)\frac{1-\varepsilon}{\varepsilon}+(R-T)}{P},
\end{align*}
which is finite for all $\varepsilon >0$. Therefore, since Proposition \ref{prop: positively invariant} holds for $\varphi^\varepsilon$, the same proof in Theorem \ref{thm: basin of s_D shrinks to zero} yields a volume bounded above from zero. As such, we cannot ensure that pure defection will leave the population. 

We can however, find the critical error rate for which cooperative equilibria exist. Recall the positivity of the separatrix is a necessary requirement for any stable sequence to emerge. Therefore, we can equate the separatrix to zero to find a critical $\varepsilon$.

\begin{proposition}\label{prop: restart probability bound for existence of cooperative sequence}
    Let $\varepsilon>0$ be the imperfect restart probability. For sufficiently large $m$ and $\gamma$, there is at least one cooperative sequence if and only if
    \begin{align*}
        \varepsilon < \frac{R-P}{T-P}.
    \end{align*}
\end{proposition}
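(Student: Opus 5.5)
The plan is to reduce the existence of a cooperative sequence under imperfect restarts to the sign of the separatrix $\varphi^\varepsilon$, and then take the limit in $m$ and $\gamma$. Since $\varphi^\varepsilon(\gamma,B,m)=\varphi(\gamma(1-\varepsilon),B,m)$, the payoff matrix $A^\varepsilon$ has the same form as $A$ with $\gamma$ replaced by $q$, up to the positive factor $(1-q)/(1-\gamma)$ on the restart branch and an analogous rescaling on the agreement branch. I therefore expect the argument of Lemma \ref{lemma: stable iff LC stable} to transfer verbatim. A stable cooperative sequence exists if and only if $s_{LC}$ is stable. As in Lemma \ref{lem: psi positive} and the discussion after Lemma \ref{lem: m3_exact}, stability of $s_{LC}$ is governed by the sign of the separatrix on the $s_D$–$s_{LC}$ face: $s_{LC}$ is stable exactly when $A^\varepsilon(LC,LC)>A^\varepsilon(D,LC)$, i.e.\ when $\varphi^\varepsilon>0$.

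The first step is to verify this transfer. I will recompute the two relevant entries $A^\varepsilon(LC,LC)$ and $A^\varepsilon(D,LC)$ from \eqref{eq: payoff_nf imperfect}. The goal is to check that their difference is a positive multiple of $(R-P)\sum_{j=1}^{m-1}q^j+(R-T)$, and similarly for the deviation inequality in the proof of Lemma \ref{lemma: stable iff LC stable}, where $\gamma$ should become $q$ throughout. This step is the main obstacle. The agreement branch of $A^\varepsilon$ weights the tail by $q^{m-1}$ rather than $q^{m}/(1-q)$, so the algebra is not literally the $\varepsilon=0$ computation. Should an exact match fail, I would fall back on taking the paper's stated identity $\varphi^\varepsilon=\varphi(q,B,m)$ as given. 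I would then only need the monotonicity argument of Lemma \ref{lemma: stable iff LC stable}, which depends on nothing more than the sign of $(R-P)\sum_{j=1}^{k-1}q^j-(T-R)$ being monotone in $k$.

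The second step is the limit. Because $q=\gamma(1-\varepsilon)<1$ uniformly in $m$, the sum $\sum_{j=1}^{m-1}q^j$ increases in both $m$ and $\gamma$. As $m\to\infty$ and $\gamma\to1$ it converges to $(1-\varepsilon)/\varepsilon$, which gives the limit stated before the proposition:
\[
\lim_{m\to\infty,\gamma\to1}\varphi^\varepsilon=\frac{(R-P)\frac{1-\varepsilon}{\varepsilon}-(T-R)}{P}.
\]
This limit is positive iff $(R-P)(1-\varepsilon)>\varepsilon(T-R)$, i.e.\ iff $R-P>\varepsilon(T-P)$, i.e.\ iff $\varepsilon<\frac{R-P}{T-P}$.

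Finally I would make ``sufficiently large'' precise using monotonicity. If $\varepsilon<\frac{R-P}{T-P}$, the limit is strictly positive. Monotone convergence then gives some $m_0$ and $\gamma_0$ with $\varphi^\varepsilon>0$ for all $m\ge m_0$ and $\gamma\ge\gamma_0$, so $s_{LC}$ is a strict Nash equilibrium and hence stable by Proposition \ref{prop: strict nash are exponentially stable}. Conversely, if $\varepsilon\ge\frac{R-P}{T-P}$, the limit is nonpositive. Since the sum increases to its limit, $\varphi^\varepsilon<0$ for every finite $m$ and every $\gamma<1$, because the partial sum is strictly below $(1-\varepsilon)/\varepsilon$. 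Then $s_{LC}$ is not stable, and by the transferred Lemma \ref{lemma: stable iff LC stable} no cooperative sequence is stable. The boundary case $\varepsilon=\frac{R-P}{T-P}$ is handled by this strictness.
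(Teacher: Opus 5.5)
Your proposal is correct and is essentially the paper's argument: reduce to stability of $s_{LC}$ via Lemma~\ref{lemma: stable iff LC stable}, use $\varphi^\varepsilon=\varphi(\gamma(1-\varepsilon),B,m)$, and read off the sign in the $m\to\infty$, $\gamma\to1$ limit. The paper phrases that last step as the critical value $\gamma^*=\frac{1}{1-\varepsilon}\frac{T-R}{T-P}$ exceeding $1$, and your strictness argument cleanly covers the boundary case $\varepsilon=\frac{R-P}{T-P}$. The obstacle you flag is not real: on both branches $A^\varepsilon$ equals $\frac{1-q}{1-\gamma}$ times the perfect-restart matrix \eqref{eq: payoff_nf} evaluated at discount $q$, so Lemma~\ref{lemma: stable iff LC stable} and the $s_{LC}$ stability criterion transfer verbatim with $\gamma$ replaced by $q$.
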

\begin{proof}
    By Lemma \ref{lemma: stable iff LC stable}, there exists a stable cooperative sequence if and only if $s_{LC}$ is stable. Hence it suffices to categorise the stability of the $s_{LC}$ strategy. By Lemma \ref{lem: psi positive}, there exists a $\gamma^*$ such that for all $\gamma \geq \gamma^*, \varphi(\gamma,B,m)\geq0$. In particular, consider $m \rightarrow \infty$, then without imperfect restarts the critical value is
    \begin{align*}
        \gamma^* = \frac{T-R}{T-P}
    \end{align*}
    In the case of imperfect restarts, we use the identity $\varphi^\varepsilon(\gamma,B,m) =\varphi(\gamma(1-\varepsilon),B,m)$. The condition is then
    \begin{align*}
        \gamma^* = \frac{1}{1-\varepsilon}\frac{T-R}{T-P}.
    \end{align*}
    For sufficiently large $\varepsilon$, $\gamma^*$ will exceed 1 and therefore be invalid. This happens when 
    \begin{align*}
        (1-\varepsilon) &= \frac{T-R}{T-P}\\
        \Leftrightarrow\quad\varepsilon &= \frac{R-P}{T-P}.
    \end{align*}
    Hence for all $\varepsilon < \frac{R-P}{T-P}$, there exists an $m^* \in \mathbb{N}$ and a $\gamma^* \in (0,1)$ such that for $m>m^*$ and $\gamma>\gamma^*$ the strategy $s_{LC}$ is stable. For $\varepsilon \geq \frac{R-P}{T-P}$, then there is no $\gamma \in (0,1)$ such that $\lim_{m\rightarrow\infty}\varphi(\gamma(1-\varepsilon),B,m)\geq0$. Since $\varphi^\varepsilon$ is increasing in $m$, $\varphi^\varepsilon<0$ for any $m$ and therefore $s_{LC}$ is unstable. 
\end{proof}

The critical region for the stability of a cooperative sequence is shown in Figure \ref{fig: critical region for stability}. Both a sufficiently high discount factor and a sufficiently low restart probability are required for any such sequence to be stable. By increasing the sequence length $m$, the longer hazing period in $s_{LC}$ expands this critical region. This creates a challenge between the large-$m$ sequences gaining stability in error-prone relations, and the optimal cooperative sequence having a larger basin size for small-$m$ (Figure \ref{fig: optimal}). This captures how short and simple sequences can be the easiest for learning populations to adopt, whilst longer sequences require more coordination but offer additional stability. 
\begin{figure}[h]
     \centering
\includegraphics[width=.75\textwidth]{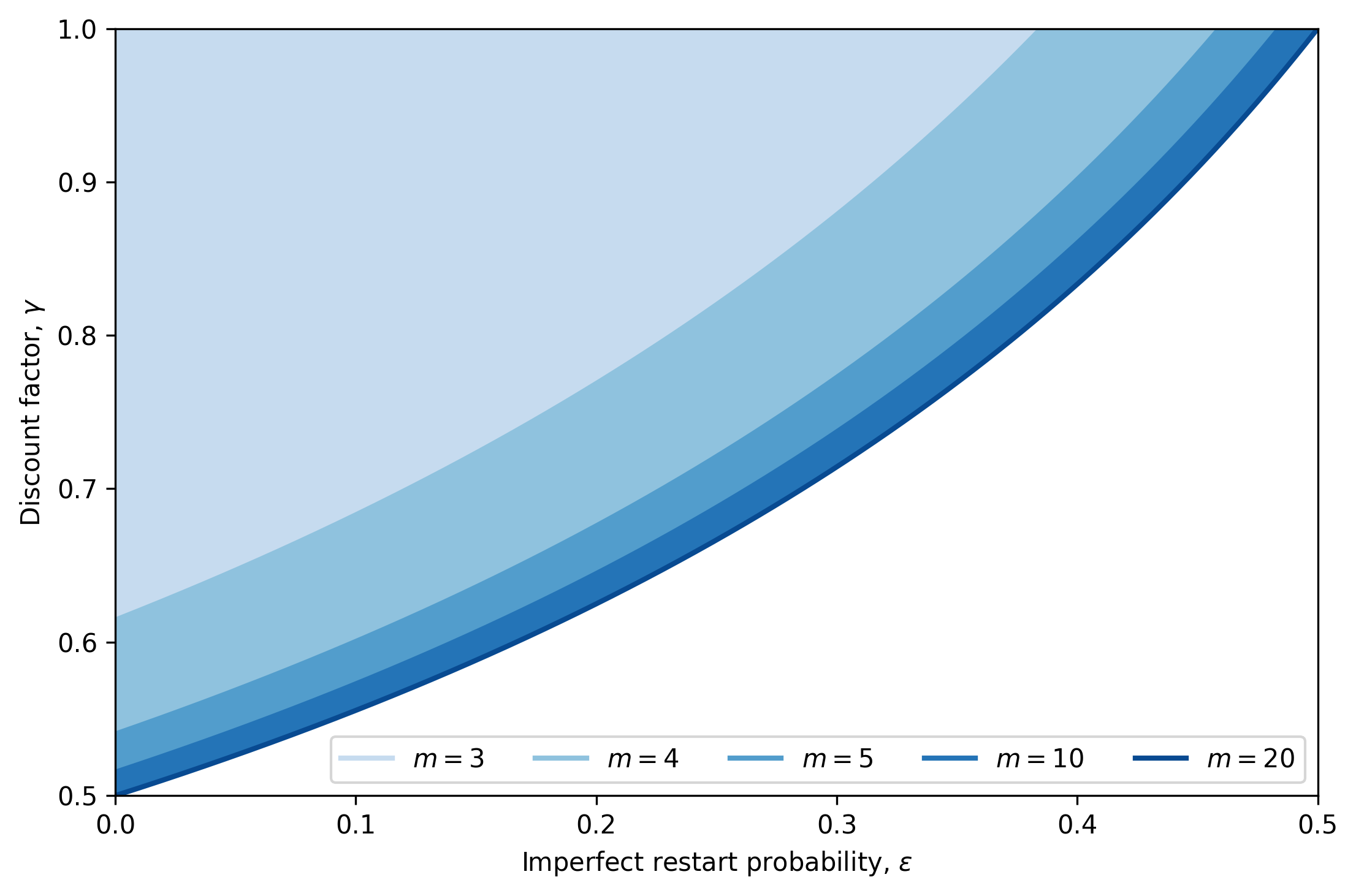}
     \caption{Critical region for $s_{LC}$ to be stable as $\gamma$ and $\varepsilon$ are varied. Increasing the strategy length increases the feasible region for stability. The payoff is such that $R=3,T=5,P=1$.}
     \label{fig: critical region for stability}
\end{figure}

\section{Discussion}
\label{sect: discussion}
We have shown that increasing strategic complexity can yield cooperative, stable equilibria when coupled with the trigger-restart mechanism. Unlike cooperation-favouring partner selection rules in the literature such as Out-for-Tat \cite{Zhang2016, zheng_simple_2017}, this mechanism merely resets a relationship when the agents disagree. Yet, we show that pure defection can still be eliminated from the population. We advance the literature showing how a population of agents can learn the theoretically stable sequences, and provide necessary and sufficient conditions on the parameter space for stability to be guaranteed. We find an analytical form for the `hazing period', where agents must defect before cooperation can emerge.

Moreover, we demonstrate that this stability is robust to execution errors, where agents face noisy observations or unintended deviations. Our analysis of imperfect restarts highlights the trade-off between the ease of learning simple cooperative sequences and the error-tolerance provided by longer, more complex sequences.

While our approach provides strong analytical guarantees, we note some limitations. Our results rely on replicator dynamics, which assume an infinite, well-mixed population. In finite populations with stochastic interactions, drift and exploration noise could alter the sizes of the basins of attraction and the time required to converge to cooperative sequences.

Future research could extend this paradigm by relaxing the strict trigger-restart mechanism, allowing agents to co-learn the restart conditions alongside their action sequences. For instance, enabling agents to dynamically adjust their resetting threshold could allow them to dynamically adapt to the varying strategic profile in the environment.
We hope that future work can build upon our analysis to further explain how mechanics affecting local interactions can induce cooperation in complex, learning populations.

\bibliography{sn-bibliography}
\begin{appendices}

\section{Additional Derivations}\label{secA1}

We present the full algebraic derivation of the manifold between the two strategies $s_D$ and $s_{LC}$.
\begin{align}
    \varphi(\gamma,B,m) &= \frac{A_{LC,LC}-A_{D,LC}}{A_{D,D}- A_{LC,D}}\\
     &=\frac{\sum_{j=0}^{m-2}(P\gamma^{j})+\frac{R\gamma^{m-1}}{1-\gamma} - \frac{\sum_{j=0}^{m-2}(P\gamma^{j})+T\gamma^{m-1}}{1-\gamma^{m}}}{P/(1-\gamma)-\frac{\sum_{j=0}^{m-2}(P\gamma^{j})}{1-\gamma^{m}}}\\
     &=  \frac{P(1-\gamma^{m-1})(1-\gamma^{m}) + R\gamma^{m-1}(1-\gamma^{m}) - P(1-\gamma^{m-1}) -T\gamma^{m-1}(1-\gamma)}{P(1-\gamma^{m}) - P(1-\gamma^{m-1})}\\
     &=  \frac{P(1-\gamma^{m-1})(-\gamma^{m}) + R\gamma^{m-1}(1-\gamma^{m}) -T\gamma^{m-1}(1-\gamma)}{P\gamma^{m-1}(1-\gamma)}\\
     &=  \frac{P(1-\gamma^{m-1})(-\gamma) + R(1-\gamma^{m}) -T(1-\gamma)}{P(1-\gamma)}\\
     &=  \frac{P(1-\gamma)-P(1-\gamma^{m}) + R(1-\gamma^{m}) -T(1-\gamma)}{P(1-\gamma)}\\
     &= \frac{P-T}{P} + \frac{R-P}{P}\frac{1-\gamma^{m}}{1-\gamma}\\
    &= \frac{(R-P)\sum^{m-1}_{j=0}\gamma^j - (T-P)}{P}\\
    &= \frac{(R-P)\sum^{m-1}_{j=1}\gamma^j + (R-T)}{P}.
\end{align}

\subsection{Imperfect Restarts}

Suppose the first deviation in the sequence happens at $\tau < \infty$, then the payoff for the row player is
\begin{align*}
    A^\varepsilon(s,t) &= \sum_{i=1}^\tau \gamma^{i-1}(1-\varepsilon)^{i-1} B_{s_i,t_i} + \sum_{i=1}^{\tau-1}\gamma^{i-1}(1-\varepsilon)^{i-1}\varepsilon \gamma A^\varepsilon(s,t) + \gamma^\tau(1-\varepsilon)^{\tau-1}A^\varepsilon(s,t)\\
    \intertext{Let $q = \gamma(1-\varepsilon)$. Rearranging for $A^\varepsilon$ yields}
    A^\varepsilon(s,t) &= \frac{\sum_{i=1}^\tau q^{i-1} B_{s_i,t_i}}{1-\varepsilon\gamma\sum^{\tau-1}_{i=1}q^{i-1} - \gamma q^{\tau-1}}.
\end{align*}
The denominator simplifies to
\begin{align*}
    1-\varepsilon\gamma\sum^{\tau-1}_{i=1}q^{i-1} + \gamma q^{\tau-1} &= 1- \frac{\varepsilon\gamma(1-q^{\tau-1})}{1-q} -\gamma q^{\tau-1}\\
    &= \frac{1-q-\varepsilon\gamma(1-q^{\tau-1})-\gamma q^{\tau-1}(1-q)}{1-q}\\
    &= \frac{(1-\varepsilon\gamma -q) -\gamma q^{\tau-1}(1-\varepsilon) + \gamma q^{\tau}}{1-q}\\
    &= \frac{1-\gamma - q^\tau(1-\gamma)}{1-q}\\
    &= \frac{(1-\gamma)(1-q^\tau)}{1-q},
\end{align*}
which yields the expression
\begin{align*}
    A^\varepsilon(s,t) &= \frac{(1-q)\sum_{i=1}^\tau q^{i-1} B_{s_i,t_i}}{(1-\gamma)(1-q^\tau)}.
\end{align*}
When no deviation occurs ($\tau = \infty$), the sequence only restarts according to the error rate. This corresponding payoff is
\begin{align*}
    A^\varepsilon(s,t) &= \sum_{i=1}^{m-1} q^{i-1} B_{s_i,s_i} + \sum_{i=m}^{\infty}q^{i-1}B_{s_m,s_m} + \sum_{i=1}^\infty q^{i-1}\varepsilon\gamma A^\varepsilon(s,t)\\
    &= \sum_{i=1}^{m-1} q^{i-1} B_{s_i,s_i} + \frac{q^{m-1}}{1-q}B_{s_m,s_m} + \frac{\varepsilon\gamma}{1-q}A^\varepsilon(s,t)\\
    \intertext{Rearranging for $A^\varepsilon$ yields}
    A^\varepsilon(s,t) &= \frac{(1-q)\sum_{i=1}^{m-1} q^{i-1} B_{s_i,s_i} + q^{m-1} B_{s_m,s_m}}{1-\gamma}.
\end{align*}
In summary,
\begin{equation*}
A^\varepsilon(s, t) :=
\begin{cases}
\dfrac{\displaystyle (1-q)\sum_{i=1}^{\tau} q^{\,i-1}\,B_{s_i,t_i}}
{(1-\gamma)(1 - q^{\tau})}, 
& \tau(s,t)<\infty, \\[1em]
{\displaystyle \frac{(1-q)\sum_{i=1}^{m-1} q^{i-1} B_{s_i,s_i} + q^{m-1} B_{s_m,s_m}}{1-\gamma}},
& \tau(s,t)=\infty.
\end{cases}
\end{equation*}
The payoff for the 4 strategies are
\begin{align*}
   A_{LC,LC} &=  \frac{(1-q)\sum_{i=1}^{m-1} Pq^{i-1} + Rq^{m-1}}{1-\gamma} =\frac{P(1-q^{m-1}) + Rq^{m-1}}{1-\gamma}\\
  A_{D,LC} &=   \frac{(1-q)\sum_{i=1}^{m-1} Pq^{\,i-1} + (1-q)q^{m-1}T}{(1-\gamma)(1 - q^{m})} = \frac{P(1-q^{m-1}) + T(1-q)q^{m-1}}{(1-\gamma)(1-q^m)}\\
 A_{D,D} &=  \frac{P}{1-\gamma} \\
   A_{LC,D} &=   \frac{(1-q)\sum_{i=1}^{m-1} Pq^{\,i-1}}{(1-\gamma)(1 - q^{m})} = \frac{P(1-q^{m-1})}{(1-\gamma)(1-q^m)}.
\end{align*}
The deviation away from the pure strategies is
\begin{align*}
     A_{LC,LC} -  A_{D,LC} &= \frac{q^{m-1}}{(1-\gamma)(1-q^m)}\big[(R-T)(1-q) + (R-P)(q-q^m)\big]\\
      A_{D,D} -  A_{D,LC} &= \frac{Pq^{m-1}(1-q)}{(1-\gamma)(1-q^m)}.
\end{align*}
Therefore,
\begin{align*}
    \varphi^\varepsilon(\gamma,B,m)& = \frac{A_{LC,LC} -  A_{D,LC}}{ A_{D,D} -  A_{D,LC}}\\
    &= \frac{(R-T)(1-q) + (R-P)(q-q^m)}{P(1-q)}\\
    &= \frac{(R-P)\sum_{j=1}^{m-1}q^j+(R-T)}{P}.
\end{align*}

\end{appendices}

\end{document}